\newcommand{\defeq}{\stackrel{\Delta}{=}}
\newcommand{\vect}[1]{{\lowercase{\mathbf{#1}}}}
\newcommand{\mat}[1]{{\uppercase{\mathbf{#1}}}}
\newcommand{\tr}{{\rm{tr}}}
\newcommand{\diag}{{\rm{diag}}}
\renewcommand{\a}{\vect{a}} 
\renewcommand{\c}{\vect{c}} 
\newcommand{\f}{\vect{f}}
\newcommand{\g}{\vect{g}}
\newcommand{\h}{\vect{h}}
\newcommand{\n}{\vect{n}}
\newcommand{\q}{\vect{q}}
\renewcommand{\r}{\vect{r}}
\renewcommand{\v}{\vect{v}} 
\newcommand{\x}{\vect{x}}
\newcommand{\y}{\vect{y}}
\newcommand{\z}{\vect{z}}
\newcommand{\A}{\mat{A}}
\newcommand{\B}{\mat{B}}
\newcommand{\D}{\mat{D}}
\newcommand{\F}{\mat{F}}
\newcommand{\G}{\mat{G}}
\renewcommand{\H}{\mat{H}} 
\newcommand{\I}{\mat{I}}
\newcommand{\M}{\mat{M}}
\newcommand{\N}{\mat{N}}
\renewcommand{\P}{\mat{P}}
\newcommand{\Q}{\mat{Q}}
\newcommand{\R}{\mat{R}}
\newcommand{\X}{\mat{X}}
\newcommand{\Phim}{\hbox{\boldmath$\Phi$}}
\definecolor{c}{rgb}{1,0,0} 
\definecolor{b}{rgb}{0,0,1} 
\newtheorem{remark}{Remark}
\newtheorem{theorem}{Theorem}
\newtheorem{proposition}{Proposition}
\begin{document}
	
	\title{Reconfigurable Intelligent Surface-aided Secret Key Generation in Multi-Cell Systems}
	\author{
		Lei~Hu,
		Chen~Sun,~\IEEEmembership{Member,~IEEE},
		Guyue~Li,~\IEEEmembership{Member,~IEEE},
		 Aiqun~Hu,~\IEEEmembership{Senior Member,~IEEE}, and  Derrick Wing Kwan Ng,~\IEEEmembership{Fellow,~IEEE}
		
		\thanks{(Corresponding author: Guyue Li.)}
		\thanks{Lei Hu and Guyue Li are with the School of Cyber Science and Engineering, Southeast University, Nanjing 210096, China. Guyue Li is also with Purple Mountain Laboratories, Nanjing 211111, China, and also with the Jiangsu Provincial Key Laboratory of Computer Network Technology, Nanjing 210096, China (e-mail: lei-hu@seu.edu.cn; guyuelee@seu.edu.cn.).}
		
		\thanks{Chen Sun and Aiqun Hu are with the National Mobile Communications
			Research Laboratory, Southeast University, Nanjing 210096, China, and also
			with the Purple Mountain Laboratories, Nanjing 211100, China (e-mail:
			sunchen@seu.edu.cn; aqhu@seu.edu.cn). Aiqun Hu is also with the Jiangsu Provincial Key Laboratory of Computer Network Technology, Nanjing 210096, China.
		}

		\thanks{Derrick Wing Kwan Ng is with the School of Electrical Engineering and Telecommunications, University of New South
		Wales, Sydney, NSW 2052, Australia (e-mail: w.k.ng@unsw.edu.au).} 
	}

	\maketitle
	
	\begin{abstract}
		Physical-layer key generation (PKG) exploits the reciprocity and randomness of wireless channels to generate a symmetric key between two legitimate communication ends. However, in multi-cell systems, PKG suffers from severe pilot contamination due to the reuse of pilots in different cells.
		In this paper, we invoke multiple reconfigurable intelligent surfaces (RISs) for adaptively shaping the environment and enhancing the PKG performance.
		To this end, we formulate an optimization problem to maximize the weighted sum key rate (WSKR) by jointly optimizing the precoding matrices at the base stations (BSs) and the phase shifts at the RISs.
		For addressing the non-convexity of the problem, we derive an upper bound of the WSKR and prove its tightness.
		To tackle the upper bound maximization problem, we apply an alternating optimization (AO)-based algorithm to divide the joint optimization into two sub-problems.
		We apply the Lagrangian dual approach based on the Karush-Kuhn-Tucker (KKT) conditions for the sub-problem of  precoding matrices and adopt a projected gradient ascent (PGA) algorithm for the sub-problem of  phase shifts.
		Simulation results confirm the near-optimal performance of the proposed algorithm and the effectiveness of RISs for improving the WSKR via mitigating pilot contamination.
	\end{abstract}
	
	\begin{IEEEkeywords}
		Physical layer security, secret key generation, reconfigurable intelligent surface (RIS), multi-cell pilot contamination.
	\end{IEEEkeywords}
	
	\section{Introduction}
		
		The  ever-increasing connectivity among a large number of devices and the ubiquitous wireless communications have aroused great awareness to establish secure communication on the fly \cite{ylianttila20206g}.
		Conventionally, secure communication is guaranteed by applying cryptographic encryption mechanisms in the application layer \cite{ZhangReview}. Particularly, symmetric keys should be distributed to legitimate parties in these mechanisms before the actual communication takes place. However, existing cryptographic techniques face difficulties to realize secret key sharing in ad-hoc and mobile networks \cite{JiaoWCM}.
		As an alternative, physical-layer key generation (PKG) exploits the intrinsic reciprocity and randomness of wireless channels to generate a pair of secret keys between the desired legitimate ends \cite{maurer1993secret}.
		Furthermore, due to the existence of spatial decorrelation, an eavesdropper, Eve, cannot obtain any information about the generated keys if she locates more than half a wavelength away from the legitimate ends, i.e., Alice and Bob \cite{Guillaume2015Bringing,2021Sum,mathur2008radio}. 

	%
	The process of PKG generally consists of four steps: channel sounding, quantization, information reconciliation, and privacy amplification \cite{mathur2008radio}. During the channel probing step, Alice and Bob exchange pilots to acquire highly correlated channel estimations.
	Then, the extracted channel estimations are respectively quantized into bit sequences at Alice and Bob in the quantization step. 
	Also, during the information reconciliation step, error-correcting codes are adopted to correct the mismatched bits between Alice and Bob.
	Finally, privacy amplification is employed to erase the bits that might have leaked information to Eve in the previous probing and reconciliation steps.
	From the above steps, it can be seen that PKG highly relies on the inherent randomness and reciprocity of wireless channels.
	However, the desired secret key rate may not be guaranteed in some harsh propagation environments, such as wave-blockage environments \cite{Li2022WCM}, \cite{22Sum-RIS}. 
	Fortunately, reconfigurable intelligent surface (RIS), which has emerged as a disruptive wireless communication technology,
	 has great potential to address this problem. 
	 In fact, RIS is a planar surface comprising a large number of low-cost passive reflecting elements \cite{Smart,di2020reconfigurable,pan2021reconfigurable}. These elements can independently adjust the phase shifts to collaboratively customize the wireless propagation environment \cite{pan2021reconfigurable}. 
	 Therefore, it is expected that deploying RIS in secure communication  can facilitate the required PKG. In particular, when the direct link between Alice and Bob is blocked, RIS could shape a RIS-induced fluctuating channel to serve as a controllable randomizer for generating a secret key.

		However, to fully unleash the potential of the RIS for improving PKG performance, the optimization of the phase shifts of RIS is required. To date, several studies have focused on the design of phase shifts of RIS \cite{2021SPL, 21JiTVT, 22Sum-RIS, Hu_MISO, Lu_MISO, Chen_MISO}. 
		For instance, \cite{2021SPL,21JiTVT,22Sum-RIS}		studied the reflection coefficients optimization  in  single-input
		single-output (SISO) systems with only one legitimate user. 
		In particular, the authors in \cite{2021SPL}  assumed the RIS-induced channel of the eavesdropper is independent	from that of the legitimate ends. Based on this, they 	derived the expression of the key generation rate (KGR) capacity and optimized the on/off states of the RIS units. Furthermore, in \cite{21JiTVT}, the authors considered RIS-assisted PKG with multiple non-colluding eavesdroppers. They designed a semidefinite relaxation
		(SDR) and successive convex approximation (SCA)-based algorithm to maximize the secret key capacity lower bound.
		Then, a RIS-assisted multiuser key generation scheme was studied in  \cite{22Sum-RIS}, where the RIS configuration was optimized to maximize the sum secret key rate of multiple users. On the other hand, \cite{Hu_MISO,Lu_MISO,Chen_MISO} 	investigated the beamforming optimization in multiple-input single-output (MISO) systems. In particular, \cite{Hu_MISO} proposed a low-complexity block successive upper-bound minimization (BSUM) with the mirror-prox method to optimize the reflective beamforming at the RIS and the transmit beamforming at the	base station (BS), with the consideration of the spatial correlation at both the BS and the RIS. In addition, 
		 \cite{Lu_MISO} treated the coupled precoding matrix and phase-shift matrix as an equivalent variable and designed a water-filling algorithm to acquire its optimal solution. Then, they recovered the two matrices from the optimized variable. 
		Furthermore, to  obtain a computationally efficient suboptimal solution to the non-convex problem in \cite{Lu_MISO}, the authors in \cite{Chen_MISO} adopted a machine learning-based algorithm to achieve a higher KGR.
	Nevertheless, all of these works i.e., \cite{2021SPL, 21JiTVT, 22Sum-RIS, Hu_MISO, Lu_MISO, Chen_MISO}, focus on the design of single-cell systems, while practical RIS-based PKG methods in multi-cell systems are still lacked. Indeed, in multi-cell systems, the same pilot pool is reused in different cells due to the limited time and frequency resources giving rise to the pilot contamination problem \cite{Jose,yinhaifan,Jose_TWC}. 
	Over the last couple of years, considerable research efforts have been devoted for studying the impact of pilot contamination on spectral and energy efficiencies and the corresponding methods for alleviating the negative impacts caused by pilot interference \cite{yinhaifan,Jose_TWC,6288608,6756975,7236930}. However, all of these works aim for reducing the uplink channel estimation error that do not align the goal of PKG which aims for  improving the reciprocity between channel estimations in the uplink and downlink. Indeed,  pilot contamination exists in both the uplink and downlink channel probing phase in PKG systems that introduces non-reciprocal interference to the channel estimations at the BSs and the user terminals (UTs).
	Therefore, this channel asymmetry caused by multi-cell pilot contamination is expected to jeopardize the key generation performance. 
	To tackle this problem, we propose to employ multiple RISs to assist the PKG in multi-cell networks. 
	Specifically, by carefully altering the phase shifts introduced by multiple RISs, the inter-cell pilot interference reflected by the RISs and the interference in the direct channel can be harnessed such that they can be destructively superimposed at the desired communication nodes to minimize the interference power. 
	Therefore, the proposed paradigm provides a new degrees of freedom (DoF) to facilitate multi-cell PKG in conjunction with the precoding matrices at the BSs. 
	However, the BSs' precoding matrices and the RISs' phase shifts have to be jointly optimized to fully unleash the potential of RISs for effective KGR provisioning. More importantly, existing techniques in \cite{2021SPL, 21JiTVT, 22Sum-RIS, Hu_MISO, Lu_MISO, Chen_MISO} cannot be directly applied to multi-cell PKG systems since the inter-cell pilot contamination is not taken into consideration.

	
	To address the above issues, this paper investigates the PKG method in multi-cell systems. We introduce RISs to combat multi-cell pilot contamination by jointly optimizing the precoding matrices at the BSs and the phase shifts at the RISs.
	 More specifically, the main contributions of this paper are as follows.
	 \begin{itemize}
	 	\item We propose a novel RIS-aided multi-cell PKG framework based on the precoding matrices at the BSs and the phase shifts at the RISs. We then derive a closed-form KGR expression that facilitates the formulation of an optimization problem to maximize the weighted sum key rate (WSKR) of all the cells. Since the formulated problem is non-convex and	difficult to solve, we derive a tight upper bound of the WSKR and maximize the upper bound.
	 	

	 	\item 
	 	To tackle the upper bound maximization problem, we employ an alternating optimization (AO)-based algorithm to alternately obtain a high-quality suboptimal solution. To be specific, a Lagrangian dual algorithm based on the KKT conditions is applied to design the precoding matrices at the BSs and a PGA
	 	algorithm is adopted to optimize the phase shifts at the RISs. 
	 	
	 	
	 	\item Simulation results verify that the WSKR by the proposed algorithm approaches the upper bound, showing the near-optimal performance.
		Also, a significant WSKR  improvement can be observed with the increase of RIS elements number. 
	 	Finally, distributed RISs, with each RIS being located in the proximity of some  UTs, offer rich spatial diversity to maximize the WSKR.
	 	
%
%
	 \end{itemize}

	\emph{Notations:} 
	In this paper, $\mathbb{C}^{A\times B}$ denotes the space of complex matrices of size $A\times B$. Matrices and vectors are
	denoted by boldface capital and lower-case letters, respectively.
	The imaginary unit of a complex number is denoted by $j=\sqrt{-1}$.
	$(\cdot)^T$ and  $(\cdot)^H$ denote  transpose and conjugate transpose, respectively. $\diag(\x)$
	denotes a diagonal matrix whose diagonal elements
	are extracted from vector $\x$.
	$\rm{vec}(\X)$ denotes the vectorization of matrix $\X$.
	 $\tr(\cdot)$  represents the trace of a matrix. 
	 $\triangleq$ means “defined as”.
	$\otimes$ denotes the  Kronecker product.
	$\mathcal{I}(X;Y)$ 
	and $\mathcal{H}(X,Y)$ 
	are the mutual information and joint entropy of random variables $X$ and $Y$, respectively. 
	$\operatorname{det}(\cdot)$ is the matrix determinant. 
	$\|\X\|_{F}$ is the Frobenius norm of matrix $\X$.
	$\mathbb{E}\{\cdot\}$ represents statistical expectation. 
	$\lambda_{\ell}(\X)$ is the $\ell$-th largest eigenvalue of matrix $\X$.
	$\mathcal{O}(\cdot)$ is the big-O notation.
	$\left[\X\right]_{m:n,i}$ means the matrix consisting of rows $m$ to $n$ and column $i$ of matrix $\X$.
	$\nabla f(\cdot)$ 
	and 
	$\partial f/\partial x$ are 
	the gradient operator of function $f$. $\I_M$ denotes the identity matrix  of dimension $M$. 

	\section{RIS-based Multi-Cell Key Generation Model} \label{sec:framework}
		\begin{figure}
		\centering
		\includegraphics[width=0.53\textwidth]{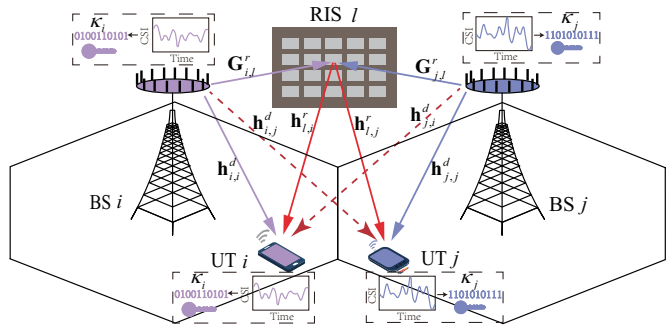}
		\caption{The model of RIS-aided PKG in multi-cell systems.}
		\label{fig:system_model}
	\end{figure}
	As shown in Fig. \ref{fig:system_model}, we consider a multi-cell PKG model constituted by $K$ cells, each of which has an $M$ antennas BS and a single-antenna UT\footnote{The pilot sequences assigned to different UTs in each cell are assumed to be orthogonal to each other to avoid potential intra-cell interference such that pilot contamination only exists among inter-cell UTs \cite{yinhaifan1}.}. 
	Under the time-division duplexing (TDD) protocol, the BS and UT in the $k$-th cell, i.e., BS $k$ and UT $k$, $k\in \left\{1,\cdots,K\right\}$, aim to generate a symmetric key ${\kappa}_{k}$ by exploiting the reciprocity of the involved wireless channels. To facilitate the PKG, the system deploys $ L $ RISs with each RIS consisting of $N$ passive reflection elements. 
	Besides, there is a smart controller for coordinating the BSs and adapting the phase shifts of the RISs to enable effective secret key generation \cite{wu2019towards}.
	
	\subsection{RIS-based Channel Model}
	The direct channel between UT $j$ and BS $i$ is denoted as	$\h_{i,j}^d \in \mathbb{C}^{M\times 1}, i,j\in \{1,\cdots,K\}$.
	When the RISs are introduced to the PKG system, they establish some indirect additional communication channels. Specifically, the channels from BS $i$ to RIS $l$, $l\in\left\{1,\cdots,L\right\}$, and from RIS $l$ to UT $j$ are denoted as $\G_{i,l}^{r} \in \mathbb{C}^{M \times N}$ and $\h_{l,j}^{r} \in \mathbb{C}^{N \times 1}$, respectively. The diagonal phase-shifting matrix of RIS $l$ is denoted by $\Phim_{l}=\diag\{\v_{l}\}$, where $\v_{l} \in \mathbb{C}^{N\times 1}$ is the phase-shifting vector adopted at RIS $l$.
	Then, the equivalent downlink channel from BS $i$ to UT $j$ is expressed as\footnote{
			Note that different multipath delays incurred by the pilot delays are ignored because the		pilots are assumed to be synchronized to maximize the power of multi-cell interference that serves as a worst case scenario \cite{yinhaifan}. 
		}
	\begin{align}
		\h_{i,j} 
		& = \h_{i,j}^{d} + \sum_{l=1}^{L}\G_{i,l}^{r} \Phim_{l}\h_{l,j}^{r} = \h_{i,j}^{d} + \H_{i,j}^{r}\v, \label{h_ij}
	\end{align}
	where $\H_{i,j}^{r} = \G_{i}^{r} \diag\{\h_{j}^{r}\}$ with $\G_{i,l}^{r} = \left[\G_{i,1}^{r},\cdots, \G_{i,L}^{r}\right]$ and $\h_{j}^{r} = \left[(\h_{l,j}^{r})^H, \cdots, (\h_{l,j}^{r})^H\right]^H $. 
	The phase-shifting vector at the RISs is $\v=\left[\v_{1}^T, \cdots, \v_{L}^{T}\right]^T$.

	From (\ref{h_ij}), the channel between BS $i$ and UT $j$ depends on the wireless channels and phase-shifting vector $\v$.
	This provides a new DoF for optimizing the key generation performance. Also, the  multi-antenna BSs have the intrinsic capability of signal processing in the spatial domain.	Hence, we  next establish a new PKG framework to exploit the  spatial diversity of the multi-antenna BSs and the environment-controlling characteristic of the RISs to enable the multi-cell secret key generation.
	

	\subsection{RIS-aided PKG Framework} 
		 As shown in Fig. \ref{fig:framework}, the proposed framework consists of three phases. Firstly, during the parameter design phase, the BSs design the precoding matrices and phase-shifting vector. Then, they convey the phase-shifting vector  to the RISs controller to configure the RISs reflection matrix. Secondly, the BSs and UTs acquire the channel estimations by channel probing. Finally, feature-to-key processing is performed to convert the channel estimations into secret keys \cite{2021Sum}, \cite{22Sum-RIS}.
		The specific procedures are shown as follows.
		\begin{figure}
			\centering
			\includegraphics[width=0.53\textwidth]{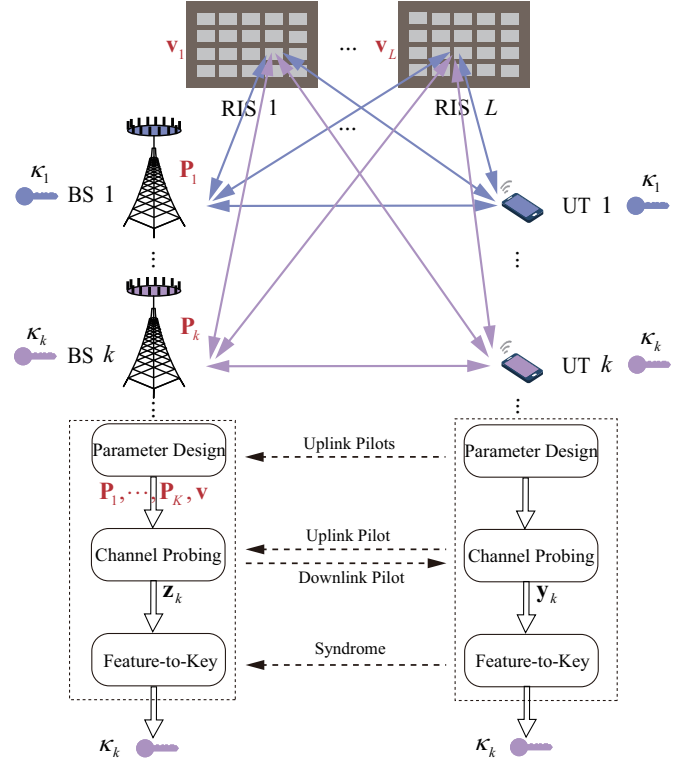}
			\caption{The proposed RIS-aided PKG framework in multi-cell systems.}
			\label{fig:framework}
		\end{figure}
	\begin{enumerate}
		\item \emph{Parameter Design:} In this step, the BSs design the precoding matrices $\P_{k},k\in \left\{1,\cdots,K\right\}$, and phase shifts $\v$ according to some statistical CSI.
		In particular, the UTs in all the cells transmit orthogonal sounding signals such that the BSs can obtain the channel covariance matrices between the BSs and the UTs. Then, the BSs design the parameters $\P_{k}$ and $\v$ via the proposed optimization algorithm, which will be elaborated in Section \ref{sec:algorithm}.
		
		\item \emph{Channel Probing:} In this step, the BSs and UTs probe the channel alternatively and extract the reciprocal channel characteristics with the help of the precoding  matrices and reflection coefficients. 
		In the downlink, each BS transmits the downlink pilot signals processed by the optimized precoding matrix $\P_k$. Then, the UTs estimate the combined channels from the received signals.
		In the uplink, each UT transmits the same pilot signals. The pilot is reflected by the RISs and received by the BSs.
		Then, the BSs estimate the channels from the received signals and utilize the precoding matrices $\P_k$ to form the combined channels.

		\item \emph{Feature-to-Key Processing:} Once the BSs and UTs acquire the channel estimations, they employ some quantization method to generate raw key bits. Finally, information reconciliation and privacy amplification steps are performed to produce the secret keys.
	\end{enumerate}
	\begin{remark}
		
		In the parameter design step, the prior information to design the precoding matrices and phase shifts are the channel covariance matrices. 
		Due to the fact that the covariance matrices alter slowly across time in dense scattering environments~\cite{yang2020asymptotic}, we can obtain these matrices from the previous several time slots by adopting some existing estimation methods e.g., \cite{yinhaifan}, \cite{6288608}.
		 Upon the parameter design phase is completed, the BSs and UTs can perform multiple channel probing rounds within some channel coherence times. Each channel probing round is completed within one coherence time to guarantee the similarity of the uplink and downlink channel estimations. Meanwhile, different channel probing rounds should be performed in different coherence times to introduce randomness to the secret keys.
	\end{remark}
	In this paper, we first focus on the  channel probing step, where the multi-cell pilot contamination exists and then propose an algorithm to design the precoding matrices and phase shifts to improve the PKG performance.


	\subsection{Signal Presentation of the Channel Probing} 

	In the PKG system, the BSs and UTs first perform channel probing to acquire the reciprocal channel estimation. 
	The process of channel probing is described as follows.
	
	The channel sounding step contains two phases, i.e., the downlink phase and the uplink phase, respectively. 
	In the downlink, BS $k$ transmits the downlink public known pilot $\X \in \mathbb{C}^{M_e \times M_e}$ with $\X\X^H=\I_{M_e}$ and  $M_e \leq M $ is the number of radio-frequency (RF) chains.
	Then, the signals received at UT $k$ is
	\begin{align}
		\left(\r_{k}^d\right)^T = \h_{k,k}^T \P_{k}^T \X
		+ \underbrace{\sum_{i=1,i\neq k}^{K}\h_{i,k}^T \P_{i}^T \X}_{\rm Signals\ from\ BSs\ in\ other\ cells} 
		+ (\n_{k}^d)^T,
	\end{align}
	where $\P_{i} \in \mathbb{C}^{M_e \times M }$ represents the precoding matrix at BS $i$ \cite{2021Sum}, \cite{Sun_access},
	$\n_{k}^d \in \mathbb{C}^{M_e\times 1}$ is the complex Gaussian noise at UT $k$ with zero mean and unit variance. 
	After the standard least-squares (LS) channel estimation~\cite{22Sum-RIS}, \cite{21JiTVT},
	 UT $k$ obtains 
	\begin{align}
		\y_{k} =\X^{*}\r_{k}^d = \P_{k}\h_{k,k}  + \sum_{i=1}^{K}\P_{i}\h_{i,k}  + \X^* {\n}_{k}^{d} .\label{yk}
	\end{align}

	In the uplink phase, the UTs simultaneously transmits the pilot $s $ with $|s|^2 = 1$ and the signal received at BS $k$ is expressed as 
	\begin{align}
		\r_{k}^{u} = \h_{k,k} s 
		+ \underbrace{\sum_{j=1,j\neq k}^{K}\h_{k,j} s}_{\rm Signals\ from\ UTs\ in\ other\ cells} 
		+ \n^{u}_{k}, 
	\end{align}
	where  $\n^{u}_{k}$ is the complex Gaussian noise zero mean and unit variance.
	Then, BS $k$ performs the LS channel estimation as
	\begin{align}
		\tilde{\z}_{k} 
		= \h_{k,k} +\sum_{j=1,j\neq k}^{K}\h_{k,j} +  s^*{\n}_{k}^{u}. \label{tilde_zk}
	\end{align} 
	Then, BS $k$ multiplies the channel estimation with the precoding matrix $\P_{k}$ that yields
	\begin{align}
		\z_{k} = \P_{k}\tilde{\z}_{k} 
		=  \P_{k}\h_{k,k} +\P_{k}\sum_{j=1,j\neq k}^{K}\h_{k,j} +  \P_{k} s^*{\n}_{k}^{u}. \label{zk}
	\end{align}


	It can be observed from (\ref{yk}) and (\ref{zk}) that the extracted channel features at BS $k$ and UT  $k$ include the reciprocal component $\P_{k} \h_{k,k}$, the inter-cell interference, and as well as the noise component. These interference terms are not perfectly reciprocal. In particular, in the $k$-th cell, the interference in the uplink, i.e., $\P_{k}\sum_{j=1,j\neq k} \h_{k,j}$, is the product of $\P_{k}$ and the aggregated channels from UT $j$ to BS $k$, while in the downlink, the interference, i.e., $\sum_{i=1,i\neq k} \P_{i}\h_{i,k}$, is the superimposed channels from BS $i$ to UT $k$. 
	Fortunately, these components are influenced by the precoding matrices $\P_{i},i\in\{1,\cdots,K\}$, at the BSs and the RIS phase-shifting vector $\v$. 
	Next, we will analyze the impact of $\P_{i}$ and $\v$ on KGR.
	

	\section{Secret Key Rate Analysis}
	In this section, based on the channel estimations shown in Section \ref{sec:framework}, 
	we derive the closed-form KGR expression and formulate an optimization problem with respect to (w.r.t.) the variables $\P_{i},i\in\{1,\cdots,K\}$, and $\v$. Then, to facilitate the design, we derive an upper bound of the KGR and formulate an upper bound maximization problem.
	
	\subsection{KGR Analysis}
	To evaluate the key generation performance, we derive the KGR expression in the following. The KGR is a comprehensive metric to evaluate the consistency, rate, and randomness of the secret key \cite{2009Key}.
	The KGR between UT $k$ and BS $k$ is defined as the mutual information between their channel estimations, which is expressed as
	\begin{align}
		R_{k} = \mathcal{I}(\y_{k};\z_{k} ).\label{mutual_information}
	\end{align}
	Given the combined channel gains $\y_{k}$ and $\z_{k}$ in (\ref{yk}) and (\ref{zk}), the closed-form expression of the KGR can be derived as follows.
	\begin{theorem}
		The KGR of UT $k$ and BS $k$ is given by
		\begin{align}
			R_k & =\ln \det\left( 
			\P_{k}\left[\sum_{j=1}^{K}\R_{k,j}^{d} +
			\left(\bar{\v}^H \otimes \I_{M} \right)\sum_{j=1}^{K}
			\R_{k,j}^{r}\left(\bar{\v} \otimes \I_{M} \right)+\I_{M}\right]\P_{k}^H
			 \right) \notag \\
			& \quad -  \ln \det\left( 
			\P_{k}\left[\sum_{j=1}^{K}\R_{k,j}^{d} +
			\left(\bar{\v}^H \otimes \I_{M} \right)\sum_{j=1}^{K}
			\R_{k,j}^{r}\left(\bar{\v} \otimes \I_{M} \right)+\I_{M}\right]\P_{k}^H \right. \notag \\
			& \quad \left.
			 - \P_{k}\left(\R_{k,k}^{d} + \left(\bar{\v}^H \otimes \I_{M}\right)
			 \R_{k,k}^{r}\left(\bar{\v} \otimes \I_{M}\right)\right)\P_{k}^H 
			 \right. \notag \\
			 & \quad \left. \times 
			 \left(\sum_{i=1}^{K}\P_{i}\left(\R_{i,k}^{d} + \left(\bar{\v}^H \otimes \I_{M}\right)
			 \R_{i,k}^{r}\left(\bar{\v}^H \otimes \I_{M}\right)^H \right) \P_{i}^H +\I_{M_e}\right)^{-1} \right. \notag \\
			 & \quad \left. \times \P_{k}\left(\R_{k,k}^{d} + \left(\bar{\v}^H \otimes \I_{M}\right)
			 \R_{k,k}^{r}\left(\bar{\v} \otimes \I_{M}\right)\right)\P_{k}^H 
			 \right), \label{Th1}
		\end{align}
		where $\R_{i,j}^d \defeq \mathbb{E} \left\{ \h_{i,j}^d \left(\h_{i,j}^d \right)^H \right\}$ and $\R_{i,j}^r \defeq \mathbb{E} \left\{ {\rm vec}\left(\H_{i,j}^r \right) {\rm vec}\left(\H_{i,j}^r \right)^H \right\}$ are the channel covariance of the direct channel and the RISs-involved channel between BS $i$ and UT $j$, respectively. The variable is defined as $\bar{\v} \defeq \v^*$ for the sake of notational simplicity.
	\end{theorem}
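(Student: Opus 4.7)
The plan is to exploit the fact that $(\y_k,\z_k)$ is jointly circularly symmetric complex Gaussian (CSCG). Assuming the channel vectors $\h_{i,j}^d$ and the vectorized cascaded-channel matrices $\mathrm{vec}(\H_{i,j}^r)$ are zero-mean CSCG with the stated covariances $\R_{i,j}^d$ and $\R_{i,j}^r$, and noting that $\y_k$ and $\z_k$ in (\ref{yk}) and (\ref{zk}) are linear combinations of these channels and the independent CSCG noises $\n_k^d,\n_k^u$, both vectors are jointly CSCG. The mutual information then admits the standard closed form
\begin{align}
\mathcal{I}(\y_k;\z_k) = \ln\det(\C_{\z_k}) - \ln\det\bigl(\C_{\z_k} - \C_{\z_k\y_k}\C_{\y_k}^{-1}\C_{\y_k\z_k}\bigr),\notag
\end{align}
so the problem reduces to identifying the three covariance blocks $\C_{\y_k}$, $\C_{\z_k}$, and $\C_{\y_k\z_k}$ as explicit functions of the optimization variables $\{\P_i\}$ and $\v$.

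First I would compute the covariance of the effective channel $\h_{i,j} = \h_{i,j}^d + \H_{i,j}^r\v$ defined in (\ref{h_ij}). Applying the vectorization identity $\H\v=(\v^T\otimes\I_M)\mathrm{vec}(\H)$, using $\v^T = \bar{\v}^H$, and invoking the mutual independence of $\h_{i,j}^d$ and $\H_{i,j}^r$ yields $\C_{\h_{i,j}} = \R_{i,j}^d + (\bar{\v}^H\otimes\I_M)\R_{i,j}^r(\bar{\v}\otimes\I_M)$. Next, using independence of the channels across distinct index pairs $(i,j)$, the whitening property $\X\X^H=\I_{M_e}$, and $|s|^2=1$, direct second-moment computation on (\ref{yk}) and (\ref{zk}) gives
\begin{align}
\C_{\y_k} &= \sum_{i=1}^K \P_i\C_{\h_{i,k}}\P_i^H + \I_{M_e},\notag\\
\C_{\z_k} &= \P_k\Bigl[\sum_{j=1}^K \C_{\h_{k,j}} + \I_M\Bigr]\P_k^H,\notag
\end{align}
while in the cross term $\mathbb{E}\{\y_k\z_k^H\}$ only the $(i,j)=(k,k)$ summand survives the independence filter (and the noise cross terms vanish), producing $\C_{\y_k\z_k} = \P_k\C_{\h_{k,k}}\P_k^H$. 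Substituting these three blocks, together with the expression for $\C_{\h_{i,j}}$, into the Gaussian mutual-information identity reproduces (\ref{Th1}) verbatim.

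The hard part will be purely a bookkeeping one: keeping the transpose, conjugate, and conjugate-transpose placements inside the Kronecker products straight so that $\mathbb{E}\{\H_{i,j}^r\v\v^H(\H_{i,j}^r)^H\}$ comes out cleanly as $(\bar{\v}^H\otimes\I_M)\R_{i,j}^r(\bar{\v}\otimes\I_M)$ rather than one of its several adjoint-looking variants, and ensuring the Hermitian symmetry between $\C_{\y_k\z_k}$ and $\C_{\z_k\y_k}$ is exploited so that the two cross-covariance occurrences on each side of the inverse in (\ref{Th1}) are written in the same form. This is precisely why the auxiliary variable $\bar{\v}\defeq\v^*$ is introduced, without which the Kronecker expressions become visually ambiguous. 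Once these identities are applied carefully, the remainder of the argument is routine substitution into the Gaussian formula.
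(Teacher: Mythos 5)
Your proposal is correct and follows essentially the same route as the paper's proof: both reduce $\mathcal{I}(\y_k;\z_k)$ for jointly Gaussian vectors to a Schur-complement/log-determinant expression (the paper starts from $\ln\frac{\det(\R_{\y_k}\R_{\z_k})}{\det(\mathcal{R}_{\y_k\z_k})}$ and applies the block-determinant identity to reach the same form you write directly), and then both substitute the covariance blocks $\R_{\z_k}$, $\R_{\y_k}$, and $\R_{\y_k\z_k}=\R_{\z_k\y_k}$ computed from (\ref{yk}) and (\ref{zk}) via the identity $\H\v=(\v^T\otimes\I_M)\mathrm{vec}(\H)$. No gaps; the bookkeeping you flag is exactly what the appendix carries out.
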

	\begin{proof}
		Please see Appendix~\ref{sec:proof-KGR-RIS}.
	\end{proof}
	\begin{remark}
		When the precoding matrices at the BSs are not optimized, i.e., $\P_{i} = \sqrt{\frac{P_i}{M}} \I_{M} $, and the RISs are not deployed, the KGR between BS $k$ and UT $k$ is given by
		\begin{align}
			R_{k} &= \ln \det\left( \sum_{j=1}^{K}\R^d_{k,j}+\I_{M}
			\right) - \ln \det\left( \sum_{j=1}^{K}\R^d_{k,j}+\I_{M}
			- {\frac{P_{k}}{M}}\R^d_{k,k}\left(\sum_{i=1}^{K}{\frac{P_{i}}{M}} \R^d_{i,k}+\I_{M}\right)^{-1}\R^d_{k,k}
			\right),\label{KGR-noRIS0}
		\end{align}
	where $P_i$ denotes the transmit power at BS $i$. Furthermore, we present a proposition of the KGR performance as follows.
	
	\begin{proposition}
		$R_k$ decreases monotonically  with increasing $P_i, i\neq k$, and the eigenvalues of $\R^d_{i,k}$ and $ \R^d_{k,j}, j\neq k$. Also, $R_k$ approaches $0$ when these values are sufficiently large.
	\end{proposition}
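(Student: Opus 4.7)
The plan is to rewrite $R_{k}$ in (\ref{KGR-noRIS0}) in a form that isolates the parameter dependence and then to handle monotonicity and the limit by a short case analysis. Define
\begin{align*}
\A\defeq\sum_{j=1}^{K}\R^{d}_{k,j}+\I_{M},\quad
\C\defeq\tfrac{P_{k}}{M}\R^{d}_{k,k}\Bigl(\sum_{i=1}^{K}\tfrac{P_{i}}{M}\R^{d}_{i,k}+\I_{M}\Bigr)^{-1}\R^{d}_{k,k},\quad
\B\defeq\A-\C.
\end{align*}
Since $\C\succeq 0$, we have $\A\succeq\B\succ 0$ and (\ref{KGR-noRIS0}) reads $R_{k}=\ln\det(\A)-\ln\det(\B)=\ln\det(\I_{M}+\C\B^{-1})$. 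The virtue of this decomposition is that each of the three parameter groups in the proposition enters $\A$, $\B$, $\C$ in a different way, which makes a short case-by-case argument possible.

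For the first two groups, namely $P_{i}$ and the eigenvalues of $\R^{d}_{i,k}$ with $i\neq k$, I would note that they enter $R_{k}$ only through the inverse that defines $\C$, while $\A$ is untouched. Adding a PSD matrix to $\R^{d}_{i,k}$ or raising $P_{i}$ enlarges $\sum_{i}\tfrac{P_{i}}{M}\R^{d}_{i,k}+\I_{M}$ in the Loewner order, so its inverse shrinks and hence $\C$ shrinks. Consequently $\B=\A-\C$ grows, $\ln\det(\B)$ rises, and $R_{k}$ falls. In the limit of large $P_{i}$ or large eigenvalues of $\R^{d}_{i,k}$ (with the perturbation full rank so that every direction is eventually dominated), the inverse vanishes, $\C\to 0$, and $R_{k}=\ln\det(\I_{M}+\C\B^{-1})\to 0$.

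The eigenvalues of $\R^{d}_{k,j}$ with $j\neq k$ require a little more care because they enter $\A$ and $\B$ identically through the sum, while $\C$ is unaffected. I would parametrize the perturbation as $\R^{d}_{k,j}\mapsto\R^{d}_{k,j}+t\Delta$ with $\Delta\succeq 0$, $t\geq 0$, and study $f(t)=\ln\det(\A+t\Delta)-\ln\det(\B+t\Delta)$. A direct differentiation yields
\begin{align*}
f'(t)=\tr\bigl[(\A+t\Delta)^{-1}\Delta\bigr]-\tr\bigl[(\B+t\Delta)^{-1}\Delta\bigr].
\end{align*}
Because $\A+t\Delta\succeq\B+t\Delta\succ 0$, inversion reverses the order, so $(\A+t\Delta)^{-1}\preceq(\B+t\Delta)^{-1}$; combining this with $\Delta\succeq 0$ forces $f'(t)\leq 0$, i.e., $R_{k}$ is non-increasing along the perturbation. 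For the corresponding limit, $\C$ remains fixed while $\B+t\Delta$ grows without bound, so $\C(\B+t\Delta)^{-1}\to 0$ and $R_{k}\to 0$.

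The obstacle I expect to be trickiest is precisely this third case, because a naive term-by-term bound on $\ln\det(\A+t\Delta)-\ln\det(\B+t\Delta)$ would suggest the two perturbations cancel. The idea that makes the argument go through is to exploit the Loewner ordering $\A\succeq\B$ so that the perturbation of $\ln\det(\B)$ strictly dominates that of $\ln\det(\A)$; the trace inequality displayed above is exactly what converts this PSD-order relation into monotonicity of $R_{k}$.
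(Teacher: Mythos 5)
Your proof is correct, and it takes a somewhat different and in fact more complete route than the paper's. The paper performs an eigenvalue decomposition of $\R^d_{i,k}$, computes the explicit derivative $\partial R_k/\partial x_\ell$ as a trace of a product of matrices sandwiching $\c_\ell\c_\ell^H$, argues its sign from positive semidefiniteness, and then dismisses the remaining parameter groups with ``the proof for $\R^d_{k,j}$ and $P_i$ can be derived similarly.'' You instead split the three parameter groups by where they enter the decomposition $R_k=\ln\det(\A)-\ln\det(\A-\C)$: for $P_i$ and $\R^d_{i,k}$ ($i\neq k$), which appear only inside the inverse defining $\C$, you use a pure Loewner-order chain (larger inner matrix $\Rightarrow$ smaller inverse $\Rightarrow$ smaller $\C$ $\Rightarrow$ larger $\B$ $\Rightarrow$ smaller $R_k$) with no calculus at all; for $\R^d_{k,j}$ ($j\neq k$), which appears in both determinants, you differentiate along a PSD direction and invoke operator antitonicity of the inverse to get $f'(t)\le 0$. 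This is a genuine improvement in completeness: the $\R^d_{k,j}$ case is structurally different from the one the paper actually computes (the perturbation hits both log-det terms and naively looks like it cancels), so the paper's ``similarly'' conceals exactly the step you make explicit with the trace inequality $\tr[((\A+t\Delta)^{-1}-(\B+t\Delta)^{-1})\Delta]\le 0$. What the paper's approach buys in exchange is an explicit quantitative expression for the rate of decrease per eigenvalue. Two small points: your limit arguments (like the paper's) need the perturbation to be full rank, or at least to dominate on the support of $\R^d_{k,k}$, which you correctly flag; and the positivity $\B=\A-\C\succ 0$ that both arguments rely on follows from $\C$ being bounded by the Schur complement of a covariance matrix that includes the unit-variance noise term, which is worth stating once.
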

	\begin{proof}
		Please see Appendix~\ref{KGR-analysis}.
	\end{proof}
			Proposition 1 proves that the KGR between BS $k$ and UT $k$ decreases with the transmit power of BSs in other cells, i.e., $P_i, i\neq k$, and the channel gain of the interference channels, i.e., $\lambda_{\ell}\left(\R^d_{ i,k }\right), i\neq k$, and $\lambda_{\ell}\left(\R^d_{ k,j }\right), j\neq k$, $\ell \in \left\{1,\cdots,M\right\}$. 
		In addition, $R_k$ tends to $0$ when these interference is sufficiently large. Therefore, the crux of the multi-cell PKG is to effectively mitigate the pilot contamination in the uplink and downlink.
	\end{remark}
	\begin{remark}
		From  (\ref{Th1}), the KGR $R_k$ is decided by the statistical channel information (CSI), i.e., the channel covariance matrices, and the parameters, i.e., the precoding matrices and the phase-shifting vector. 
		In this paper, we assume that the covariance matrices are perfectly known by the BSs in the parameter design phase\footnote{The impact of imperfect statistical CSI on PKG performance is left to future work.}
		 and we focus on 
		the design of precoding matrices and phase-shifting vector.
	\end{remark}
	
	Now, we jointly design the precoding matrices and phase shifts to improve the key generation performance. Specifically, we aim for maximizing the WSKR of all the cells by jointly optimizing the $\P_{i}, i\in \{1,\cdots,K\}$, and $\bar{\v}$, while guaranteeing the total power constraint of each BS and the unit modulus of the reflection coefficient of each RIS element. Specifically, the optimization problem can be formulated as
	\begin{align}
		\underset{\P_{i},i\in\{1,\cdots,K\}, {\bar{\v}}}{\text{maximize}}
		\
		&
		\sum_{k=1}^{K}w_{k} R_{k} \notag \\
		\ \text { s.t. } 
		&\ \text {C1}{:}\,||\P_{i}||_{F}^2 \leq P_\mathrm{A} , \forall i\in \{1,\cdots,K\}, \notag \\
		&\ \text {C2}{:}\, |v_{n} |\ = 1 , \forall n\in \{1,\cdots, NL\},\label{problem-0}
	\end{align}
	where $w_k \geq 0$ denotes the weight for the $k$-th cell that represents the priority of the corresponding BS and UT. Constraint $\text{C1}$ represents the power of the precoding matrices at each BS should be less than the maximum transmit power $P_{\rm A}$ \cite{Sun_access}. 
	Constraint $\text{C2}$ represents the  modulus constraint of each
	reflection coefficient at the RISs \cite{Yu_JSAC}.
	
	Note that the objective function WSKR is non-convex w.r.t. $\P_i$ and $\bar{\v}$ since these variables are coupled in the
	matrix inversion operation. Additionally, the unit-modulus constraint $\text{C2}$ is also non-convex. Hence, Problem (\ref{problem-0}) is generally NP-hard and 
	difficult to find the globally optimal solution \cite{bubeck2015convex}. Next, as a compromise, we derive an upper bound of the KGR and formulate an upper bound maximization problem.

	\subsection{Upper Bound of KGR}
	Recalling (\ref{tilde_zk}) and (\ref{zk}), the channel estimation and the combined channel at BS $k$ are $\tilde{\z}_{k} 
	= \sum_{j=1}^{K}\h_{k,j} + s^*{\n}_{k}^{u}$ and $\z_{k} = \P_{k} \tilde{\z}_{k}$, respectively. 
	Therefore,  $\y_{k}$ and ${\z}_{k}$ are conditionally independent given $\tilde{\z}_{k}$, which means $\y_{k}$, $\tilde{\z}_{k}$, and $\z_{k}$ form a Markov chain $\y_{k} \rightarrow \tilde{\z}_{k} \rightarrow \z_{k}$. According to the data-procesing inequality~\cite{cover1999elements}, an upper bound of the KGR can be expressed as $R_k = \mathcal{I}(\y_{k};\z_{k}) \leq \mathcal{I}(\y_{k};\tilde{\z}_{k}) \defeq R_k^{\rm ub}$. Then, the closed-form expression of $R_k^{\rm ub}$ can be derived as follows.

	\begin{theorem}\label{Th2}
		The KGR of $k$-th cell, $R_{k}$, is upper bounded by
		\begin{align}
			R_k^{\rm  ub} = \ln \det \left(\sum_{i=1}^{K}\P_{i}\M_{i,k} \P_{i}^H +\I_{M_e}\right) - 
			\ln \det\left(\sum_{i=1}^{K}\P_{i}\M_{i,k} \P_{i}^H +\I_{M_e} -  
			\P_{k} \N_{k,k} \P_{k}^H  \right), \label{KGR:ub}
		\end{align}
		where 
		\begin{align}
			\M_{i,k} &= \R_{i,k}^{d} + \left(\bar{\v}^H \otimes \I_{M}\right)
			\R_{i,k}^{r}\left(\bar{\v}^H \otimes \I_{M}\right)^H, \\
			\N_{k,k} & = \left(\R_{k,k}^{d} + \left(\bar{\v}^H \otimes \I_{M}\right)
			\R_{k,k}^{r}\left(\bar{\v} \otimes \I_{M}\right)\right)
			\R_{\tilde{\z}_{k}}^{-1} \left(\R_{k,k}^{d} + \left(\bar{\v}^H \otimes \I_{M}\right)
			\R_{k,k}^{r}\left(\bar{\v} \otimes \I_{M}\right)\right).
		\end{align}
	\end{theorem}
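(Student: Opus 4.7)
The plan is to start from the fact, already noted in the text, that the Markov chain $\y_k \to \tilde{\z}_k \to \z_k$ together with the data-processing inequality gives $R_k = \mathcal{I}(\y_k;\z_k) \leq \mathcal{I}(\y_k;\tilde{\z}_k) \triangleq R_k^{\rm ub}$. The remaining work is to evaluate $\mathcal{I}(\y_k;\tilde{\z}_k)$ in closed form. Since every random quantity entering $\y_k$ (from (\ref{yk})) and $\tilde{\z}_k$ (from (\ref{tilde_zk})) is a zero-mean circularly symmetric complex Gaussian vector (the channels, and the noises ${\n}_k^d,{\n}_k^u$), the pair $(\y_k,\tilde{\z}_k)$ is jointly Gaussian, and hence
\begin{align*}
\mathcal{I}(\y_k;\tilde{\z}_k)=\ln\det(\R_{\y_k})-\ln\det\!\bigl(\R_{\y_k}-\R_{\y_k\tilde{\z}_k}\R_{\tilde{\z}_k}^{-1}\R_{\tilde{\z}_k\y_k}\bigr).
\end{align*}
Thus the proof reduces to computing the three covariances $\R_{\y_k}$, $\R_{\tilde{\z}_k}$, and $\R_{\y_k\tilde{\z}_k}$ and matching them with the claimed form.

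Next, I would substitute $\h_{i,j}=\h_{i,j}^d+\H_{i,j}^r\v$ from (\ref{h_ij}) and invoke the Kronecker identity $\mathrm{vec}(AXB)=(B^T\otimes A)\mathrm{vec}(X)$, which, since $\bar{\v}=\v^*$ gives $\bar{\v}^H=\v^T$, yields $\H_{i,j}^r\v=(\bar{\v}^H\otimes\I_M)\mathrm{vec}(\H_{i,j}^r)$. Taking the outer product and using the definitions of $\R_{i,j}^d$ and $\R_{i,j}^r$, together with the independence of the direct component $\h_{i,j}^d$ and the cascaded component $\H_{i,j}^r$, I obtain
\begin{align*}
\mathbb{E}\bigl[\h_{i,j}\h_{i,j}^H\bigr]=\R_{i,j}^d+(\bar{\v}^H\otimes\I_M)\,\R_{i,j}^r\,(\bar{\v}^H\otimes\I_M)^H=\M_{i,j},
\end{align*}
which is exactly the quantity appearing in the theorem.

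With this in hand, the three covariances fall out directly. Using independence across cells of the channels $\{\h_{i,j}\}_{i,j}$ and of the noises ${\n}_k^d,{\n}_k^u$, the noise terms contribute $\I_{M_e}$ and $\I_M$ respectively and all inter-cell cross terms vanish, so
\begin{align*}
\R_{\y_k}&=\sum_{i=1}^K \P_i\M_{i,k}\P_i^H+\I_{M_e},\qquad \R_{\tilde{\z}_k}=\sum_{j=1}^K \M_{k,j}+\I_M.
\end{align*}
For the cross-covariance $\R_{\y_k\tilde{\z}_k}=\sum_{i,j}\P_i\,\mathbb{E}[\h_{i,k}\h_{k,j}^H]$, the only pair $(i,j)$ for which $\h_{i,k}$ and $\h_{k,j}$ refer to the same physical link is $(i,j)=(k,k)$, so only that term survives and $\R_{\y_k\tilde{\z}_k}=\P_k\M_{k,k}$. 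Plugging these into the Gaussian mutual-information formula above and recognizing $\N_{k,k}=\M_{k,k}\R_{\tilde{\z}_k}^{-1}\M_{k,k}$ gives exactly (\ref{KGR:ub}).

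I expect the main obstacle to be bookkeeping rather than anything conceptually deep: one must be careful that $\bar{\v}$ denotes $\v^*$ (so that $\bar{\v}^H=\v^T$) when applying the $\mathrm{vec}$/Kronecker identity, and that the cross-covariance is correctly reduced to the single in-cell term by invoking independence of distinct BS–UT links and of the uplink/downlink noises. Once these are handled, all other steps are direct substitutions into the Gaussian MI formula.
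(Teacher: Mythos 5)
Your proposal is correct and follows essentially the same route as the paper: the data-processing inequality applied to the Markov chain $\y_{k} \rightarrow \tilde{\z}_{k} \rightarrow \z_{k}$, followed by evaluating $\mathcal{I}(\y_{k};\tilde{\z}_{k})$ via the jointly Gaussian mutual-information formula $\ln\det(\R_{\y_{k}})-\ln\det(\R_{\y_{k}}-\R_{\y_{k}\tilde{\z}_{k}}\R_{\tilde{\z}_{k}}^{-1}\R_{\tilde{\z}_{k}\y_{k}})$ and substituting the same three covariance matrices (computed with the $\mathrm{vec}$/Kronecker identity and inter-link independence). The identification $\N_{k,k}=\M_{k,k}\R_{\tilde{\z}_{k}}^{-1}\M_{k,k}$ and the reduction of the cross-covariance to the single in-cell term match the paper's derivation exactly.
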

	\begin{proof}
		Please see Appendix~\ref{sec:proof-KGR-upperbound}.
	\end{proof}
	\begin{remark}
		We next prove  the tightness of the derived upper bound. Specifically, the sufficient condition for achieving the upper bound, i.e., $R_k^{\rm ub} = R_k$, is $\mathcal{I}(\y_{k};\tilde{\z}_k | \z_{k}) = 0$. In other words, $\y_{k}$, $\z_{k}$, and $\tilde{\z}_{k}$ also form a Markov chain $\y_{k}  \rightarrow \z_{k}\rightarrow \tilde{\z}_{k} $ \cite{cover1999elements}. Therefore, one typical case for the equality hold is $M_e = M$ such that $\mathcal{I}(\y_{k};\tilde{\z}_k | \P_{k}\tilde{\z}_k) = \mathcal{H}(\tilde{\z}_k|\P_{k}\tilde{\z}_k) -\mathcal{H}(\tilde{\z}_k|\y_{k},\P_{k}\tilde{\z}_k)= 0$. 
		This means when the number of RF  chains at the BS is the same as the number of antennas, the derived upper bound is tight and equal to the actual KGR. 
		
	\end{remark}
	
	Thanks to Theorem \ref{Th2}, the objective function is significantly simplified. Then, the upper bound optimization problem can be expressed as
	\begin{align}
		\underset{\P_{i},i\in\{1,\cdots,K\}, {\bar{\v}}}{\text{maximize}} \
		&\ \sum_{k=1}^{K}w_{k} R_k^{\rm ub}
		\notag \\
		\ \text { s.t. } 
		&\ \text {C1},\text {C2}. \label{problem-01}
	\end{align}
	It can be seen that the  variables $\P_{i},i=1,\cdots,K$, and $\bar{\v}$ are still coupled in the objective function of Problem (\ref{problem-01}). 
	In the next section, we apply an AO-based algorithm to alternately solve for $\left\{\P_{1},\cdots, \P_{K}\right\}$ and $\bar{\v}$ while fixing one of the variable sets.

\section{AO-based Algorithm for Maximum Upper Bound} \label{sec:algorithm}
In this section, we provide an AO-based algorithm to tackle Problem (\ref{problem-01}).
Specifically, to tackle the coupling of the optimization  variables, we divide the joint problem into two sub-problems, each of which optimizes $\left\{\P_{1},\cdots,\P_{K} \right\}$ or $\bar{\v}$. 
For the sub-problem for $\left\{\P_{1},\cdots,\P_{K} \right\}$, we apply a Lagrangian multiplier method based on KKT conditions, while the sub-problem for $\bar{\v}$ is solved by the PGA algorithm.

\subsection{Optimization of the Precoding Matrices at the BSs}
We first present the optimization of precoding matrices $\left\{\P_{1},\cdots,\P_{K}\right\}$ for given $\bar{\v}$.
By denoting $\P = \left[\P_{1},\cdots,\P_{K}\right]$, the sub-problem of $\P$ optimization is given by
\begin{align}
\underset{\P }{\text{maximize}} \
	&\ \bar{g}\left(\P\right) =
	\sum_{k=1}^{K}w_{k} \left(\ln \det \left(\P \M_{k} \P^H +\I_{M_e}\right) - 
	\ln \det\left(\P\N_{k} \P^H +\I_{M_e}  \right)\right)
	 \notag \\
	\ \text { s.t. } 
	&\ \text {C1},   \label{subproblem-p}
\end{align}
where $\M_k = \diag \left(\M_{1,k},\cdots,\M_{K,k}\right)$ and 
$\N_k = \diag \left(\M_{1,k},\cdots,\M_{k,k} - \N_{k,k},\cdots,\M_{K,k}\right) $.
Problem (\ref{subproblem-p}) is still non-convex since the objective function is in high-order w.r.t. $\P$. To derive an efficient suboptimal solution, we employ the Lagrangian multiplier method to address this problem.
Specifically, by introducing the Lagrange multipliers $\lambda_{i}, i=1,\cdots,K$, the Lagrangian function of Problem (\ref{subproblem-p}) is 
\begin{align}
	\mathcal{L}\left(\P_{i},\lambda_{i}\right)= \sum_{k=1}^{K}w_{k} \left(\ln \det \left(\P \M_{k} \P^H +\I_{M_e}\right) - 
	\ln \det\left(\P\N_{k} \P^H +\I_{M_e}  \right)\right) -     \tr \left[\P {\mat \Lambda}  \P^H\right]  + P_{\rm A} \sum_{i=1}^{K} \lambda_{i},
\end{align}
where ${\mat \Lambda} = \diag \left(\lambda_{1}\I_{M},\cdots, \lambda_{K}\I_{M}\right)$
and $\lambda_{i} \geq 0, i\in \left\{1, \cdots, K\right\}$ is associated with the power constraint in $\text{C1}$ at BS $i$. 
By analyzing the KKT condition, we can characterize the structure of the suboptimal solution to $\P$ in the following theorem.
\begin{theorem}\label{th3}
	The KKT condition for Problem (\ref{subproblem-p}) is given by
	\begin{align}
		\rm{vec}\left(\P\right) 
		&= \left({\mat \Lambda}^T \otimes\I_{M_e}
		+\sum_{k=1}^{K}w_{k}\left(\N_{k}^T 
		\otimes \left(\I_{M_e} + 
		\P\N_{k}\P^H \right)^{-1}\right) 
		\right)^{-1} \notag \\
		&\quad \times \sum_{k=1}^{K}w_{k}
		\left(
		\left(\M_{k}^T \otimes
		\left(\P\M_{k}\P^H + \I_{M_e} \right)^{-1}\right) 
		\rm{vec}\left(\P\right)
		\right). \label{KKT}
	\end{align}
\end{theorem}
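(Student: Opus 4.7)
The plan is to derive the stated fixed-point KKT equation by taking the complex matrix gradient of the Lagrangian $\mathcal{L}(\P,\lambda_i)$ with respect to $\P^*$ (Wirtinger derivative), setting it to zero, and then vectorizing using the Kronecker identity $\mathrm{vec}(\A\X\B)=(\B^T\otimes \A)\,\mathrm{vec}(\X)$. I will invoke two standard matrix calculus facts: $\partial \ln\det(\P\A\P^H+\I_{M_e})/\partial \P^*=(\P\A\P^H+\I_{M_e})^{-1}\P\A$ for any Hermitian $\A$, and $\partial\,\tr(\P\,{\mat\Lambda}\P^H)/\partial \P^*=\P\,{\mat\Lambda}$. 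Since $\M_k$ and $\N_k$ are Hermitian (they arise from channel covariance matrices, which are themselves Hermitian), both identities apply directly to every term in $\mathcal{L}$.

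Concretely, first I would write the stationarity condition
\begin{equation*}
\sum_{k=1}^{K} w_k\bigl[(\P\M_k\P^H+\I_{M_e})^{-1}\P\M_k-(\P\N_k\P^H+\I_{M_e})^{-1}\P\N_k\bigr]-\P\,{\mat\Lambda}=\mathbf{0},
\end{equation*}
together with the primal feasibility $\|\P_i\|_F^2\le P_{\mathrm A}$, dual feasibility $\lambda_i\ge 0$, and complementary slackness $\lambda_i(\|\P_i\|_F^2-P_{\mathrm A})=0$ for $i=1,\dots,K$. Next I would rearrange this stationarity identity into
\begin{equation*}
\sum_{k=1}^{K} w_k(\P\M_k\P^H+\I_{M_e})^{-1}\P\M_k=\P\,{\mat\Lambda}+\sum_{k=1}^{K} w_k(\P\N_k\P^H+\I_{M_e})^{-1}\P\N_k,
\end{equation*}
so that every term has $\P$ sandwiched between a left factor (an $M_e\times M_e$ inverse matrix or $\I_{M_e}$) and a right factor ($\M_k$, $\N_k$, or ${\mat\Lambda}$).

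Then I would apply the vectorization identity to each term: the left-hand side becomes $\sum_k w_k(\M_k^T\otimes(\P\M_k\P^H+\I_{M_e})^{-1})\,\mathrm{vec}(\P)$; the Lagrangian penalty term becomes $({\mat\Lambda}^T\otimes \I_{M_e})\,\mathrm{vec}(\P)$; and the right-hand summation becomes $\sum_k w_k(\N_k^T\otimes(\P\N_k\P^H+\I_{M_e})^{-1})\,\mathrm{vec}(\P)$. Collecting the latter two contributions on one side yields the bracketed $\ker$-like operator acting on $\mathrm{vec}(\P)$; inverting this operator (which is positive definite whenever $\lambda_i>0$ or $\N_k$ is positive definite, so the inverse in (\ref{KKT}) is well-defined) gives exactly the fixed-point expression stated in the theorem.

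The main obstacle I expect is purely bookkeeping: keeping the order of matrices correct inside each $\mathrm{vec}(\A\X\B)$ application and tracking which matrix becomes a transpose under the Kronecker factor, since the two groups of terms ($\M_k$ and $\N_k$) have symmetric structure and it is easy to swap left and right factors. I will also briefly remark that the expression is a \emph{fixed-point} condition (both sides depend on $\P$), which motivates the iterative Lagrangian dual procedure used in the sequel, and that the multipliers $\{\lambda_i\}$ are determined by enforcing $\|\P_i\|_F^2=P_{\mathrm A}$ on the active constraints via a standard bisection search.
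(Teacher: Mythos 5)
Your proposal is correct and follows essentially the same route as the paper: compute the conjugate (Wirtinger) gradient of the Lagrangian with respect to $\P^{*}$, set it to zero, and vectorize each term via $\mathrm{vec}(\A\X\B)=(\B^{T}\otimes\A)\,\mathrm{vec}(\X)$ to obtain the stated fixed-point form. Your version is in fact slightly more complete than the paper's, since you also record primal/dual feasibility and complementary slackness and note the Hermitian structure of $\M_k$ and $\N_k$ that justifies the gradient identities.
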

\begin{proof}
	Please see Appendix~\ref{sec:proof-KKT}.
\end{proof}

Next, the value of $\lambda_{i},i\in \left\{1,\cdots,K\right\}$, should  be chosen for ensuring the power constraint at each BS in (\ref{problem-01}) is satisfied, which means
\begin{align}
	\lambda_{i}
	\left(\tr(\P_{i}\P_{i}^H )- P_{\rm A}
	\right) = 0.
\end{align}
To this end, recalling the KKT condition in (\ref{KKT}), we can rewrite the $\text{vec}\left(\P\right)$ as
	\begin{align}
		\text{vec}\left(\P\right) 
		&= \left(\diag \left(\lambda_{1} \I_{MM_e},\cdots,\lambda_{K} \I_{MM_e}\right) 
		+ \F \right)^{-1} \a \\
		& = \diag \left(\left(\lambda_{1} \I_{MM_e} + \F_{1}\right)^{-1},\cdots,\left(\lambda_{K} \I_{MM_e} + \F_{K}\right)^{-1}
		\right) \a,
	\end{align}
	where 
	\begin{align}
		\F = \sum_{k=1}^{K}w_{k}\left(\N_{k}^T 
		\otimes \left(\I_{M_e} + 
		\P\N_{k}\P^H \right)^{-1}\right) = \diag \left(\F_{1}, \cdots, \F_{K} \right) ,
	\end{align}
	and $\a = \sum_{k=1}^{K}w_{k}
	\left(
	\left(\M_{k}^T \otimes
	\left(\P\M_{k}\P^H + \I_{M_e} \right)^{-1}\right) 
	\rm{vec}\left(\P\right)
	\right) = \left[\a_{1}^T, \cdots, \a_{K}^T\right]^T$.	Thus, we can express the $\P_{i}$ at BS $i$ as 
	\begin{align}
		\text{vec}\left(\P_{i}\right) = \left(\lambda_{1} \I_{MM_e} + \F_{i}\right)^{-1} \a_{i}.
	\end{align}

By performing eigenvalue decomposition on matrix $\F_{i}$ as $\F_{i} = \D_{i} \diag \left(\f_{i}\right)
\D_{i}^H$, we have 
	\begin{align}
		\phi_{i}(\lambda_{i}) &= \text{vec}\left(\P_{i}\right)^H \text{vec}\left(\P_{i}\right) \\
		&= 
		\a_{i}^H \D_{i}\left(\lambda_{i}\I_{MM_e} +  \diag \left(\f_{i}\right)
		\right)^{-1}\left(\lambda_{i}\I_{MM_e} +  \diag \left(\f_{i}\right)
		\right)^{-1}\D_{i}^H \a_{i}  = \sum_{m=1}^{MM_e}\frac{b_{i,m}^2}{\left(\lambda_{i} + f_{i,m}\right)^2},
	\end{align}
where $\f_{i} = [f_{i,1},\cdots, f_{i,MM_e}]^T$ and $\D_{i}^H \a_{i} = [b_{i,1},\cdots, b_{i,MM_e}]^T$. Since $f_{i,m} \geq 0$, it can be verified that $f_{i}(\lambda_{i})$ is a monotonically decreasing function w.r.t. $\lambda_{i}$. 

Therefore, if $\phi_{i}(0) \leq P_{\rm A}$, the optimized $\P_{i}$ is $\text{vec}\left(\P_{i}^{\rm KKT}\right) = \F_{i}^{-1}\a_{i}$ when $\lambda_{i} $ is set to 0. On the other hand, if $\phi_{i}\left(0\right) > P_{\rm A}$,  there exists a $\bar{\lambda}_{i}$  that satifies $\phi_{i}\left(\bar{\lambda}_{i} \right) = P_{\rm A}$. To find  $\bar{\lambda}_{i}$, we can apply the bisection based search method and the upper bound of $\lambda_i$ is set as 
\begin{align}
	\lambda_{i}^{\rm ub} = \sqrt{\frac{\sum_{m=1}^{M} b_{i,m}^2 }{P_{\rm A}}},
\end{align}
since 
\begin{align}
	\phi_{i}(\lambda_{i}) < \sum_{m=1}^{M}\frac{b_{i,m}^2}{\left(\lambda_{i}^{\rm ub} \right)^2} = P_{\rm A}.
\end{align}

	The overall algorithm to optimize $\P$ and $\lambda_{i},i\in\left\{1,\cdots,K\right\}$, is summarized as Algorithm 1.
	\begin{algorithm}[h]
		\caption{The Lagrangian Dual Algorithm for Problem (\ref{subproblem-p}).}
		\label{alg:1}
		\begin{algorithmic}[1]
			\Require $w_k, \M_{k}$, $\N_{k}, k=1,\cdots,K$, $P_{\rm A}, \epsilon$.
			\State 
			Set: $t = 0$ (iteration index).
			\State Initial:  $\P^{(0)}$.	
			\Repeat  
			\State Calculate ${\P}^{(t+1)}$ as
			\begin{align}
				{\rm vec}\left(\P^{(t+1)}\right) 
				&= \left({\mat \Lambda}^T \otimes\I_{M_e}
				+\sum_{k=1}^{K}w_{k}\left(\N_{k}^T 
				\otimes \left(\I_{M_e} + 
				\P^{(t)}\N_{k}\left(\P^{(t)}\right)^H \right)^{-1}\right) 
				\right)^{-1} \notag \\
				&\quad \times \sum_{k=1}^{K}w_{k}
				\left(
				\left(\M_{k}^T \otimes
				\left(\P^{(t)}\M_{k}\left(\P^{(t)}\right)^H + \I_{M_e} \right)^{-1}\right) 
			{	\rm vec}\left(\P^{(t)}\right)
				\right). \label{eq:Pi}
			\end{align}
			\State Set: $i=1$ (BS index).
			\Repeat
			\State Initialize the bounds $\lambda_{i}^{\rm ub}$ and $\lambda_{i}^{\rm lb}$.
			\Repeat
			\State If $\phi_{i}(0) \leq P_A$ holds, $\rm{vec}\left(\P_{i}^{\text{KKT}}\right) = \F_{i}^{-1}\a_{i}$; otherwise, go to the next step.
			\State Calculate $\lambda_{i} = \left(\lambda_{i}^{\rm ub} + \lambda_{i}^{\rm lb}\right)/2$.
			\State If $\phi_{i}(\lambda_{i}) \leq P_A$, set $\lambda_{i}^{\rm ub} = \lambda_{i}$; Otherwise, set $\lambda_{i}^{\rm lb} = \lambda_{i}$.
			\Until{$|\lambda_{i}^{\rm ub} - \lambda_{i}^{\rm lb}| \leq \epsilon$.}
			\Until{$i=K$.}
			\State Set $t = t+1$ and calculate the secret key rate $\bar{g}(\P^{(t+1)})$.
			\Until{$\frac{|\bar{g}(\P^{(t+1)}) -  \bar{g}(\P^{(t)})|}{\bar{g}(\P^{(t)})} \leq \epsilon .$}
		\end{algorithmic}
	\end{algorithm}

\subsection{Optimization of Phase Shifts at the RISs}
Next, we present the optimization of phase shift vector $\bar{\v}$ when $\left\{\P_{1},\cdots,\P_{K}\right\}$ are fixed. To facilitate the algorithm design, we can first formulate the sub-problem for the RISs as 
\begin{align}
	\underset{ {\bar{\v}}}{\text{maximize}} \
	&\ \tilde{g}(\bar{\v}) = \sum_{k=1}^{K}w_{k}R_{k}^{\rm ub}(\bar{\v}) \notag \\
	\ \text { s.t. } 
	&\ \text {C2} ,  \label{subproblem-v}
\end{align}
where 
\begin{align}
	R_{k}^{\rm ub}(\bar{\v}) &= \ln
	\det \left(\L_k\right)  - \ln
	\det \left(\L_k - 
	\left(  \P_{k}\R_{k,k}^d  +
	{\left(\bar{\v}^H \otimes \I_{M_e} \right)}
	\left(\I_{NL} \otimes \P_{k} \right)
	\R_{k,k}^{r}  {\left(\bar{\v}^H \otimes \I_M \right)^H}\right) \right.\notag \\
	& \left.\quad \times
	\left(\sum_{j=1}^{K}\R_{k,j}^{d} +
	{\left(\bar{\v}^H \otimes \I_M \right)}\sum_{j=1}^{K}
	\R_{k,j}^{r}{\left(\bar{\v}^H \otimes \I_M \right)^H}+\I_M\right)^{-1}
	\right.\notag \\
	& \left.\quad \times \left(\R_{k,k}^d \P_{k}^H +
	{\left(\bar{\v}^H \otimes \I_{M} \right)}
	\R_{k,k}^{r} \left(\I_{NL} \otimes \P_{k} \right)^H {\left(\bar{\v}^H \otimes \I_{M_e} \right)^H}\right)
	\right), \label{Rub_v}
\end{align}
where $\L_k = \sum_{i=1}^{K}\P_{i}\R_{i,k}^{d}\P_{i}^H +
{\left(\bar{\v}^H \otimes \I_{M_e} \right)}
\sum_{i=1}^{K}
\left(\I_{NL} \otimes \P_{i} \right)
\R_{i,k}^{r}\left(\I_{NL} \otimes \P_{i} \right)^H
{\left(\bar{\v}^H \otimes \I_{M_e} \right)^H} + \I_{M_e}$. 
From (\ref{Rub_v}), the objective function is intractable w.r.t. $\bar{\v}$.
The unit modulus constraint $\text{C2}$ is also non-convex and there is no general approach to solve unit modulus constrained 
problems optimally.
Therefore, we adopt the PGA algorithm to find a  stationary solution of Problem (\ref{subproblem-v}) \cite{Papazafeiropoulos}. At each iteration, PGA projects the solution onto the closest feasible point satisfying the unit-modulus constraint.

Specifically, at iteration $t$, we first calculate the conjugate gradient $\nabla_{\bar{\v}^*} g(\bar{\v})$ as the ascent direction to guarantee the increase of the objective function.
The closed-form expression of the gradient is derived as follows.
\begin{theorem}\label{lemma-v}
	The Euclidean gradient of function $g(\bar{\v})$ w.r.t. $\bar{\v}^{*}$ is given by
	\begin{align}
		\nabla_{\bar{\v}^*} g(\bar{\v}) = \sum_{k=1}^{K} w_k \left(\sum_{m=1}^{M_e} \bar{\q}_{k,m}+ \sum_{m=1}^{M}\bar{\g}_{k,m}\right),
	\end{align}
where 
$\bar{\q}_{k,m} = \left[ \Q_{k}\right]_{(m-1)NL+1:mNL,m}$, 
$\bar{\g}_{k,m} = \left[ \G_{k}\right]_{(m-1)NL+1:mNL,m}$ and
\begin{align}
	\Q_{k} 
	& = \mat{K}_{M_eNL}
	\sum_{i=1}^{K}
	\left(\I_{NL} \otimes \P_{i} \right)
	\R_{i,k}^{r}\left(\I_{NL} \otimes \P_{i} \right)^H
	\left(\bar{\v} \otimes \I_{M_e} \right)
	\R_{\y_{k}}^{-1} 
	\mat{K}_{M_e} \notag \\
	&\quad - \mat{K}_{M_eNL}
	\sum_{i=1}^{K}
	\left(\I_{NL} \otimes \P_{i} \right)
	\R_{i,k}^{r}\left(\I_{NL} \otimes \P_{i} \right)^H
	\left(\bar{\v} \otimes \I_{M_e} \right)
	\left(\R_{\y_{k}} -  \R_{\y_{k}\tilde{\z}_{k}}\R_{\tilde{\z}_{k}}^{-1} \R_{\tilde{\z}_{k}\y_{k}} \right)^{-1} \mat{K}_{M_e} \notag \\
	&\quad +\mat{K}_{M_eNL}
	\left(\I_{NL} \otimes \P_{k} \right)
	\R_{k,k}^{r}  \left(\bar{\v} \otimes \I_{M} \right)
	\R_{\tilde{\z}_{k}}^{-1} \R_{\tilde{\z}_{k}\y_{k}} 
	\left(\R_{\y_{k}} -  \R_{\y_{k}\tilde{\z}_{k}}\R_{\tilde{\z}_{k}}^{-1} \R_{\tilde{\z}_{k}\y_{k}} \right)^{-1} 
	\mat{K}_{M_e}, \notag \\
	\G_{k} &= 
	- \mat{K}_{MNL}\sum_{j=1}^{K}
	\R_{k,j}^{r}\left(\bar{\v} \otimes \I_M \right)
	\R_{\tilde{\z}_{k}}^{-1}\R_{\tilde{\z}_{k}\y_{k}}  
	\left(\R_{\y_{k}} -  \R_{\y_{k}\tilde{\z}_{k}}\R_{\tilde{\z}_{k}}^{-1} \R_{\tilde{\z}_{k}\y_{k}} \right)^{-1} 
	\R_{\y_{k}\tilde{\z}_{k}}
	\R_{\tilde{\z}_{k}}^{-1}
	\mat{K}_{M} \notag \\
	&\quad + \mat{K}_{MNL}
	\R_{k,k}^{r} \left(\I_{NL} \otimes \P_{k} \right)^H \left(\bar{\v} \otimes \I_{M_e} \right)
	\left(\R_{\y_{k}} -  \R_{\y_{k}\tilde{\z}_{k}}\R_{\tilde{\z}_{k}}^{-1} \R_{\tilde{\z}_{k}\y_{k}} \right)^{-1} 
	\R_{\y_{k}\tilde{\z}_{k}}\R_{\tilde{\z}_{k}}^{-1} 
	\mat{K}_{M}, \label{grad}
\end{align}
where $\mat{K}_{I} \in \mathbb{C}^{I \times I}, I\in \left\{M_e,M_eNL, MNL, M\right\}$, is the commutation matrix \cite{zhang2017matrix}. 
%
\end{theorem}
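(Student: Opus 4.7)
The plan is to compute $\nabla_{\bar{\v}^*}g(\bar{\v})$ via Wirtinger calculus, term by term on each $R_k^{\rm ub}(\bar{\v})$. The key observation is that every $\bar{\v}$-dependent block inside $R_k^{\rm ub}$ has a bilinear sandwich structure of the form $(\bar{\v}^H\otimes\I_a)\,Y\,(\bar{\v}\otimes\I_a)$, in which $\bar{\v}^H=(\bar{\v}^*)^T$ depends only on $\bar{\v}^*$ while $\bar{\v}$ is treated as independent. Hence, under Wirtinger conventions, the differentiation acts only on the left factor in each sandwich, which dramatically reduces the number of terms.

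First I would rewrite $R_k^{\rm ub}(\bar{\v})$ in the equivalent Schur-complement form $R_k^{\rm ub}=\ln\det(\R_{\y_k})-\ln\det\!\bigl(\R_{\y_k}-\R_{\y_k\tilde{\z}_k}\R_{\tilde{\z}_k}^{-1}\R_{\tilde{\z}_k\y_k}\bigr)$, where $\R_{\y_k}$, $\R_{\tilde{\z}_k}$, and $\R_{\y_k\tilde{\z}_k}$ are the (cross-)covariances of the signal models (\ref{yk}) and (\ref{tilde_zk}). This rewriting matches the matrix inverses appearing in (\ref{grad}) and comes from the block-covariance identity $\det\!\begin{bmatrix}\R_{\y_k}&\R_{\y_k\tilde{\z}_k}\\\R_{\tilde{\z}_k\y_k}&\R_{\tilde{\z}_k}\end{bmatrix}=\det(\R_{\tilde{\z}_k})\det(\R_{\y_k}-\R_{\y_k\tilde{\z}_k}\R_{\tilde{\z}_k}^{-1}\R_{\tilde{\z}_k\y_k})$. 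It also clarifies why the first line of $\Q_k$ depends only on $\R_{\y_k}^{-1}$ (coming from $\ln\det(\R_{\y_k})$), while the remaining three terms (second line of $\Q_k$ and both lines of $\G_k$) come from differentiating the Schur-complement log-det.

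Next, I would apply the standard complex identity $\partial\ln\det(F)/\partial F^{*}=F^{-H}$ together with the chain rule $\partial\ln\det(F(\bar{\v}^*))/\partial\bar{\v}^{*}=\bigl(\partial\,\mathrm{vec}(F)/\partial\bar{\v}^{*}\bigr)^{T}\mathrm{vec}(F^{-T})$. To compute $\partial\,\mathrm{vec}(F)/\partial\bar{\v}^{*}$ on each sandwich $(\bar{\v}^{H}\otimes\I_{a})Y(\bar{\v}\otimes\I_{a})$, I would invoke the vectorization identity $\mathrm{vec}(ABC)=(C^{T}\otimes A)\mathrm{vec}(B)$ and the relation $\mathrm{vec}(X^{T})=\mat{K}_{a}\,\mathrm{vec}(X)$ in order to move the $\bar{\v}^{*}$-dependence into a single, rightmost, column-vector position. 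This is precisely where every commutation matrix $\mat{K}_{M_e}$, $\mat{K}_{M_eNL}$, $\mat{K}_M$, $\mat{K}_{MNL}$ in (\ref{grad}) is forced to appear: each corresponds to a specific reshape $((\bar{\v}^{*})^{T}\otimes\I_{a})\mapsto\bar{\v}^{*}$ with an appropriate outer dimension $a$. Once the derivative is in standard column-vector form, the sum-over-$m$ decomposition $\nabla_{\bar{\v}^{*}}g=\sum_{k}w_{k}\bigl(\sum_{m=1}^{M_e}\bar{\q}_{k,m}+\sum_{m=1}^{M}\bar{\g}_{k,m}\bigr)$ follows because each sandwich contributes one block column of a matrix of size $NL\!\times\!a$, whose diagonal slabs of length $NL$ are exactly $\bar{\q}_{k,m}$ (for $a=M_e$, the $\P_i$-precoded sandwiches in $\R_{\y_k}$) and $\bar{\g}_{k,m}$ (for $a=M$, the un-precoded sandwiches in $\R_{\tilde{\z}_k\y_k}$).

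The main obstacle is the dimension bookkeeping of the Kronecker/commutation algebra. In particular, the same $\bar{\v}$ enters through two different Kronecker products with identities of distinct sizes $M_e$ and $M$, so the reshape to a single $NL$-vector must be performed twice, leading to two distinct matrix ``accumulators'' $\Q_k$ and $\G_k$. Routine but tedious cancellations collapse several intermediate cross-terms using the Woodbury-like identity $\R_{\y_k}^{-1}-(\R_{\y_k}-\R_{\y_k\tilde{\z}_k}\R_{\tilde{\z}_k}^{-1}\R_{\tilde{\z}_k\y_k})^{-1}=-\R_{\y_k}^{-1}\R_{\y_k\tilde{\z}_k}(\R_{\tilde{\z}_k}-\R_{\tilde{\z}_k\y_k}\R_{\y_k}^{-1}\R_{\y_k\tilde{\z}_k})^{-1}\R_{\tilde{\z}_k\y_k}\R_{\y_k}^{-1}$, which converts the difference of the two log-det gradients into the Schur-complement pattern visible in (\ref{grad}). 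Once these cancellations are applied, grouping terms by their $M_e$- or $M$-dimensioned Kronecker factor yields the stated formula.
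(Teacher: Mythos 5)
Your proposal matches the paper's proof in all essentials: both start from the Schur-complement form $R_k^{\rm ub}=\ln\det(\R_{\y_k})-\ln\det\left(\R_{\y_k}-\R_{\y_k\tilde{\z}_k}\R_{\tilde{\z}_k}^{-1}\R_{\tilde{\z}_k\y_k}\right)$, apply the Wirtinger differential $d\ln\det(F)=\tr(F^{-1}dF)$ acting only on the $\bar{\v}^H$ factors, and use Kronecker/commutation-matrix identities to isolate $d(\bar{\v}^{*})$ and read off the block columns $\bar{\q}_{k,m}$ and $\bar{\g}_{k,m}$. One small remark: the Woodbury-type collapse you invoke at the end is unnecessary --- the stated $\Q_{k}$ keeps $\R_{\y_k}^{-1}$ and $\left(\R_{\y_k}-\R_{\y_k\tilde{\z}_k}\R_{\tilde{\z}_k}^{-1}\R_{\tilde{\z}_k\y_k}\right)^{-1}$ as separate summands, so the two log-det gradients are simply added term by term rather than merged.
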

\begin{proof}
	Please see Appendix~\ref{sec:proof-gradient-v}.
\end{proof}
Then, the next iteration point is calculated as $\bar{\v}^{(t+1)} = \exp \left(j\arg \left(\bar{\v}^{(t)} + \mu \nabla_{\bar{\v}^*} g(\bar{\v})\right)\right)$, where $\mu$ is the step size computed by the backtracking line search \cite{bubeck2015convex} and the $\arg$ operation is adopted for satisfying the unit-modulus constraint.

The overall algorithm to optimize $\bar{\v}$ is summarized as Algorithm 2. 
The objective values of Problem (\ref{subproblem-v}) are non-decreasing since the search direction is set as the steepest ascent direction $\nabla_{\bar{\v}^*} g(\bar{\v})$ and the backtracking line search is employed to find a suitable step size \cite{bubeck2015convex}.
In addition, since the solution set for $\bar{\v}$ is compact, the maximum value of WSKR is bounded. Therefore, the objective value converges over iterations.
\begin{algorithm}[h]
	\caption{The PGA Algorithm for Problem (\ref{subproblem-v}).}
	\label{alg:2}
	\begin{algorithmic}[1]
		\Require $\R_{ i,j }^d, \R_{ i,j }^r,i,j=1,\cdots,K, \epsilon$.
		\State 
		Set: $t = 0$ (iteration index).
		\State Initial:  $\bar{\v}^{(0)}$.	
		\Repeat  
		\State Calculate the conjugate  gradient $\nabla_{\bar{\v}^*} g(\bar{\v}^{(t)})$ by Theorem \ref{lemma-v}.
		\State Find step size $\mu$ by backtrack line search \cite{bubeck2015convex}.
		\State Update ${\bar{\v}}^{(t+1)}$ as
		\begin{align}
			{\bar{\v}}^{(t+1)} =\exp\left(   j \arg\left({\bar{\v}}^{(t)} + \mu \nabla_{\bar{\v}^*} g(\bar{\v}^{(t)})\right)\right)  .
		\end{align}
		\State Set $t = t+1$ and calculate the secret key rate $\tilde{g}(\bar{\v}^{(t+1)})$.
		\Until{$\frac{|\tilde{g}(\bar{\v}^{(t+1)}) -  \tilde{g}(\bar{\v}^{(t)})|}{\tilde{g}(\bar{\v}^{(t)})} \leq \epsilon$.}
	\end{algorithmic}
\end{algorithm}


\subsection{Complexity Analysis}
We analyze the computational complexity of the proposed algorithm. 
We assume the number of the AO iteration is $T_{\rm AO}$ and then calculate the complexity required to solve each sub-problem.

For Algorithm 1, in each iteration, the complexity to update $\P$ is $\mathcal{O}(K(M_eM)^3)$. 
The complexity of evaluating the Lagrangian multipliers $\lambda_{i},i\in \{1,\cdots,K\}$ can be ignored. Hence, the complexity for Algorithm 1 is $\mathcal{O}(T_{\rm KKT} K(M_eM)^3)$, where $T_{\rm KKT}$ is the required number of iterations.
For Algorithm 2,  the optimization of the RIS phase shifts depends on the number of gradient updates, $T_{\rm PGA}$, and the amount of operations performed
in each gradient update.  The complexity for computing the gradient $\nabla_{\bar{\v}^*} g(\bar{\v})$ is $\mathcal{O}((MNL)^3)$. Thus, the complexity of Algorithm 2 is $\mathcal{O}(T_{\rm PGA}(MNL)^3)$.
 Thus, the overall  complexity of the proposed algorithm is $ \mathcal{O}(T_{\rm AO}( T_{\rm KKT}K M_e^3 M^3 + T_{\rm PGA}M^3N^3L^3))  $.

\section{Simulation Results}
In this section, simulation results are presented to illustrate
the performance of the proposed multi-cell RIS-aided PKG scheme.

\subsection{Simulation Setup}
In the simulation, the channels are modeled by the product of large-scale path loss and  small-scale fading. In particular, the large-scale path loss is denoted as $ \sqrt{\zeta_{0} d^{-\alpha} } $, where $d$, $\zeta_{0}$, and $\alpha$ are the distance, path loss at 1 m, and the path loss exponent, respectively. 
Due to the extensive obstacles and scatters between the BSs and the UTs, the path loss exponents of the BS-UT (BU) links, BS-RIS (BR) links, and the RIS-UT (RU) links  are given by $\alpha_{\rm BU} = 3.75$ and $\alpha_{\rm BR} = \alpha_{\rm RU}  = \alpha_{\rm RIS}  = 2.2$, respectively \cite{Pancell}. 
The heights of the BSs, RISs, and UTs are $30$ m, $10$ m, and  $1.5$ m, respectively.
Furthermore, the small-scale fading between the BSs and UTs is assumed to follow Rayleigh fading and the small-scale fading  of the BS-RIS link and RIS-UT link are modeled as Rician distribution \cite{Yu_JSAC}. $\beta$ denotes the Rician factor and is set as $3$ in the simulation \cite{Pancell}.
The noise power at the BSs and UTs is $-90$ dBm \cite{Yu_JSAC}. 

Additionally, Two baseline schemes are adopted: (1) \textbf{No-RIS}: the RIS-related channels are set to zero and the precoding matrices at the BSs are optimized by Algorithm 1, and (2) \textbf{RandPhase}: the phase shifts of the RISs are random and the precoding matrices at the BSs are optimized by Algorithm 1.

\subsection{Two-Cell Scenario}
We first consider a two-cell case, as shown in Fig. \ref{fig:simulation_model}, where the $x$ and $y$ axes represent the horizontal plane and the $z$ axis represents the corresponding height.
The two BSs are located at $(0,0,10)$ and $(600,0,0)$, respectively, while there are two UTs at $(280,0,0)$ and $(320,0,0)$, respectively \cite{Pancell}. There is one RIS deployed at $(300, 0,0)$, which is the cell boundary.


\begin{figure}
	\centering
	{\includegraphics[width=0.53\textwidth]{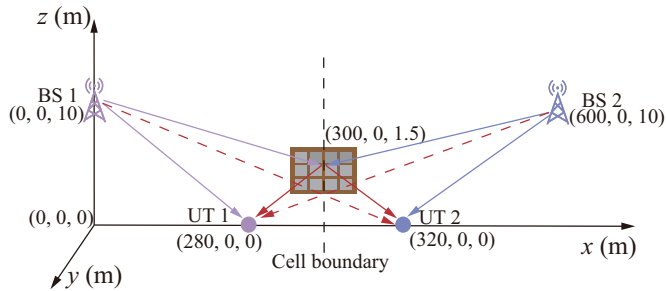}} 
	\caption{Simulation setup for the RIS-assisted PKG in a two-cell scenario.}\label{fig:simulation_model}
\end{figure}

\begin{figure}
	\centering
	{\includegraphics[width=0.8\textwidth]{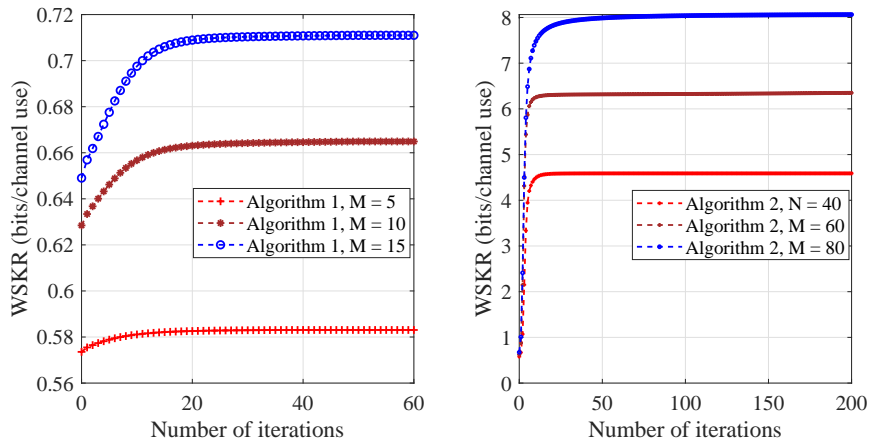}} 
	\caption{The convergence of Algorithm 1 (left half of the figure) and Algorithm 2 (right half of the figure) for $P_{\rm A} = 30$ dBm.}\label{fig:convergence_inner}
\end{figure}

\begin{figure}
	\centering
	{\includegraphics[width=0.53\textwidth]{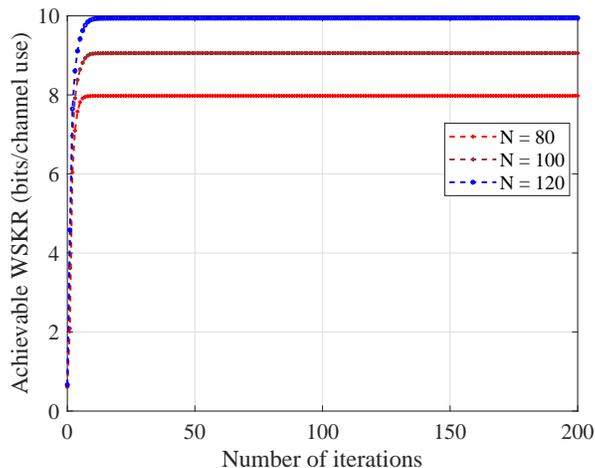}} 
	\caption{The convergence of the AO algorithm for $M=M_e=4$, $P_{\rm A} = 30 $ dBm.}\label{fig:convergence_outer}
\end{figure}
First, Fig. \ref{fig:convergence_inner} shows the convergence behavior of Algorithm 1 and Algorithm 2, which correspond to the left and right subfigures, respectively. It can be seen that Algorithm 1 and Algorithm 2 converge monotonically for all of the considered values of $M$ and $N$. This verifies the excellent convergence properties of Algorithms 1 and 2. Furthermore, Fig. \ref{fig:convergence_outer} depicts the achieved WSKR versus the number of AO iterations. As can be observed, the WSKR converges rapidly to stationary values after a few iterations on average. For $N=80$, the proposed algorithm converges after around $19$ iterations on average. For the case with more RIS elements, i.e., $N=100$, the number of iterations required for the convergence is $25$. This is because the solution space is enlarged when more optimization variables are involved and thus more iterations are required for convergence.

\begin{figure}
	\centering
	{\includegraphics[width=0.53\textwidth]{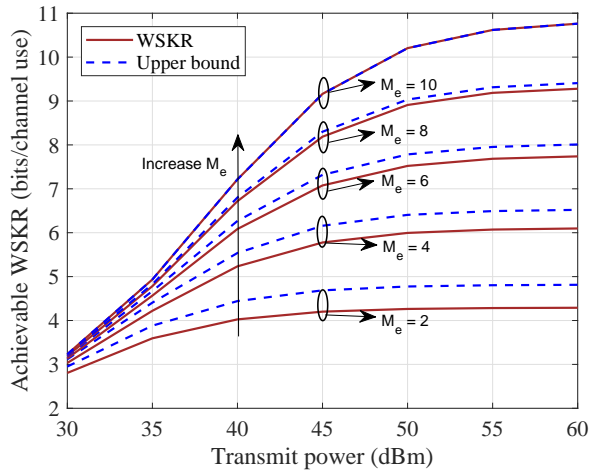}} 
	\caption{Achievable WSKR versus the transmit power at the BSs, $P_{\rm A}$, for $M=10$ and $N=20$.}\label{fig:R_P}
\end{figure}
Next, in Fig. \ref{fig:R_P}, the WSKR versus the maximum transmit power at each BS for different $M_e$ is plotted. First, it can be observed that the WSKRs in all of the settings increase with the transmit power, since the impact of noise becomes insignificant. At the same time, the WSKR  increases with diminishing returns at high transmit power region. This is because only the transmit power at the BSs is increased while the power of the UTs is fixed that remains the system performance bottleneck. In particular, the BSs' channel estimations are still impaired by the noise components, causing the saturation in the WSKR. Furthermore, it can be seen that the WSKR increases with $M_e$, since a higher dimensional channel features can be exploited with a larger $M_e$. 
Finally, Fig. \ref{fig:R_P} shows the tightness of the derived upper bound. Specifically, the performance gap between WSKR and the upper bound is reduced with $M_e$ increasing and becomes exact when $M_e = M$. This is because having a larger $M_e$, the dimension of $\P_{k}\z_{k}$ approaches that of  $\z_{k}$. Thus, the upper bound $\mathcal{I}(\y_{k};\P_{k}\z_{k})$ becomes tight and accurate in describing $\mathcal{I}(\y_{k};\z_{k}) $ such that the proposed design is effective. 


\begin{figure}
	\centering
	{\includegraphics[width=0.53\textwidth]{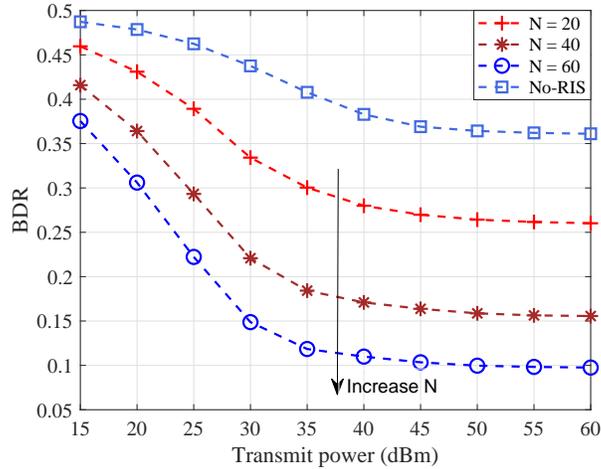}} 
	\caption{BDR versus the transmit power $P_{\rm A}$ for $M=M_e=4$.}\label{fig:BDR_P}
\end{figure}
To illustrate the impact of the proposed algorithm on the channel reciprocity, we show the  BDR result in Fig. \ref{fig:BDR_P}. 
The BDR is defined as the ratio between the
number of disagreement bits and the number of total bits \cite{ZhangReview,Guillaume2015Bringing,2021Sum,mathur2008radio}.
It can be seen that the BDR of No-RIS baseline scheme is much higher than the proposed RIS-aided scheme and the former saturates at around $0.36$ BDR in the high transmit power regime. This is because the non-reciprocal interference significantly degrades the similarity of the uplink and downlink channel estimations.
In contrast, for the proposed RIS-aided PKG scheme, a lower BDR can be achieved with the increases of RIS elements number $N$, since the RIS is equipped with a higher ability to suppress the pilot contamination-caused inter-cell interference with larger $N$.

\begin{figure}
	\centering
	{\includegraphics[width=0.53\textwidth]{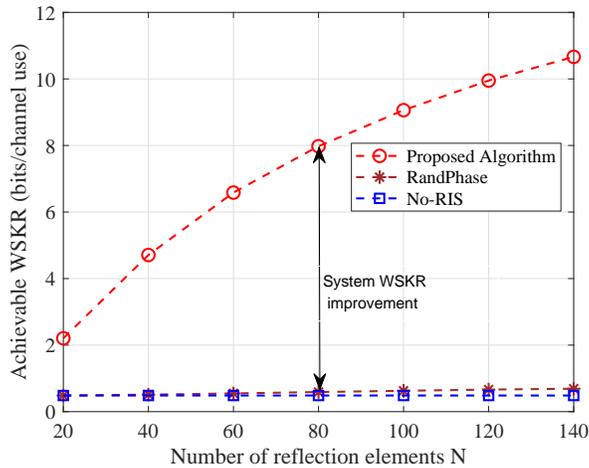}} 
	\caption{Achievable WSKR versus the number of RIS elements $N$ for $P_{\rm A} = 30$ dBm, $M=M_e=4$.}\label{fig:R_N}
\end{figure}
Fig. \ref{fig:R_N} shows the WSKR of different PKG schemes versus the number of RIS elements $N$. Firstly, as can be observed, the WSKRs of the two baseline schemes are significantly low and approach zero. This is because the UTs are located at the cell edge and suffer from severe inter-cell pilot contamination. 
In contrast, the WSKR of the proposed RIS-based scheme is improved significantly the performance gain increases with the number of RIS elements. 
 This is because the multi-cell interference in the uplink and downlink is eliminated by adjusting the phase shifts of the RIS and with more RIS elements in place, the proposed design
 becomes more flexible to create pencil-like beams to focus the reflected signals on the target UTs and BSs.

\begin{figure}
	\centering
	{\includegraphics[width=0.53\textwidth]{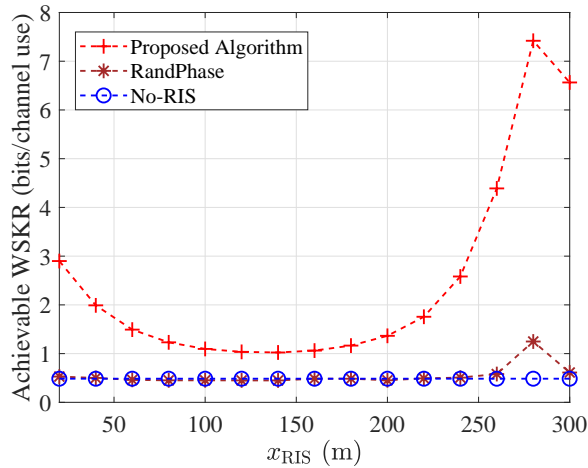}} 
	\caption{Achievable WSKR versus the location of the RIS $x_{\rm RIS}$ for $M=M_e=4$, $N=60$, $P_{\rm A} = 30$ dBm.}\label{fig:R_RIS_location}
\end{figure}
Fig. \ref{fig:R_RIS_location} presents the WSKR versus the coordinate of the RIS $(x_{\rm RIS},0,0)$. We vary the location of the RIS from near the cell center, $x_{\rm RIS} = 20$ m, to the cell boundary, $x_{\rm RIS} = 300$ m. It can be observed that the RandPhase baseline scheme achieves its maximum value when the RIS is located near one of the UTs, i.e., $x_{\rm RIS} = 280$, which is caused by the array gain brought by the RIS.
Also, the proposed algorithm enjoys superior performance than that of the two baseline schemes in all of the considered cases. 
It is interesting to observe that the WSKR achieved by the proposed algorithm first decreases from $x_{\rm RIS} = 20$ to $x_{\rm RIS} = 140$, then increases from $x_{\rm RIS} = 140$ to $x_{\rm RIS} = 280$, and finally decreases from $x_{\rm RIS} = 280$ to $x_{\rm RIS} = 300$ m. This is mainly caused by the variations of the large-scale fading of the RIS channels. To be specific, the large-scale channel gain of the RIS channel is the product of the gain of the BS-RIS link and that of the RIS-UT link.
Thus, for BS $2$ and UT $2$, the large-scale fading of the RIS channels can be approximated by $\zeta_{0} \sqrt{\left((600 - x_{\rm RIS})(320-x_{\rm RIS})\right)^{-\alpha_{\rm RIS}}}$, which is negligible for the considered $0\leq x_{\rm RIS}\leq 300$. 
However, for BS 1 and UT 1, when $0\leq x_{\rm RIS}\leq 280$, the large-scale fading of the RIS channels can be approximated by $\zeta_{0} \sqrt{\left(x_{\rm RIS}(280-x_{\rm RIS})\right)^{-\alpha_{\rm RIS}}}$, which achieves its minimum value at $x_{\rm RIS} = 140$  and its maximum value at $x_{\rm RIS} = 0$ or $x_{\rm RIS} = 280$. This means that the channel gain is minimal when the RIS is located at the middle point between BS 1 and UT 1, while the gain is maximum when the RIS is located close either to BS 1 or UT 1.
Furthermore, when the RIS is located away from UT 1, i.e., $ 280 < x_{\rm RIS}\leq 300$, the large-scale fading of the RIS channels between BS $1$ and UT $1$ is given by $\zeta_{0} \sqrt{\left(x_{\rm RIS}(x_{\rm RIS} - 280)\right)^{-\alpha_{\rm RIS}}}$, which decreases with an increasing $x_{\rm RIS}$.
 Additionally, it can be observed from Fig. \ref{fig:R_RIS_location} that the WSKR of deploying the RIS close to the BS, i.e., $x_{\rm RIS}$ approaches $0$, is less than that of deploying near the UT, i.e., $x_{\rm RIS}$ approaches $280$. This is because deploying the RIS  close to the cell center is less effective in alleviating the inter-cell pilot contamination.

\begin{figure}
	\centering
	{\includegraphics[width=0.53\textwidth]{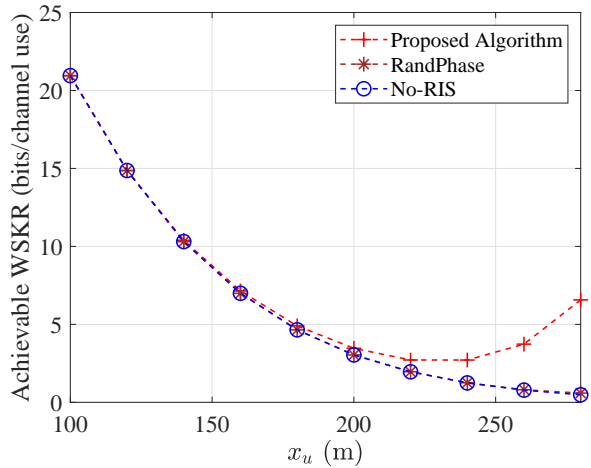}} 
	\caption{Achievable WSKR versus the location of the RIS $x_{\rm UT}$ for $M=M_e=4$, $N=60$, $P_{\rm A} = 30$ dBm.}\label{fig:R_user_location}
\end{figure}
Fig. \ref{fig:R_user_location} shows the WSKR versus the location of the UTs under different schemes. The x-coordinates of UT 1 and UT 2 are $x_{u}$ and $600 - x_{u}$, respectively.
From this figure, the WSKR of all the schemes decrease with $x_{u}$ being from $100$ m to $280$ m.
This is mainly due to the following two reasons. 
First, the channel gain between the BSs and UTs decreases as the UTs move away from their home BSs.
Second, the inter-cell interference becomes dominated when the UTs move from the cell center to the cell boundary. 
However, the proposed RIS-aided PKG scheme provides significant performance gain when the UTs are close to the RIS. Specifically, when $x_{u} = 280$ m, the WSKR of the proposed scheme and the two baseline schemes are $6.57$ and $0.49$ bits/channel use, respectively. This is because the UTs receive strong reflected signals from the RIS and the inter-cell interference is substantially mitigated by the RIS.

\begin{figure}
	\centering
	{\includegraphics[width=0.53\textwidth]{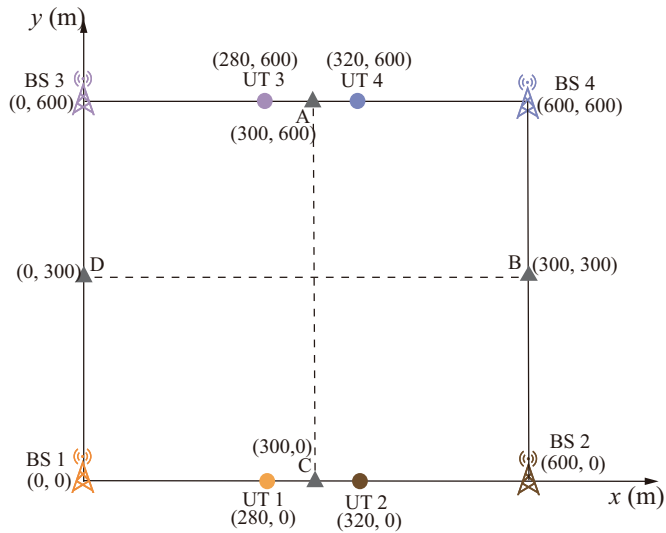}} 
	\caption{Simulation setup for the RIS-aided PKG in a four-cell scenario. }\label{fig:Four-cell}
\end{figure}
\begin{figure}
	\centering
	{\includegraphics[width=0.53\textwidth]{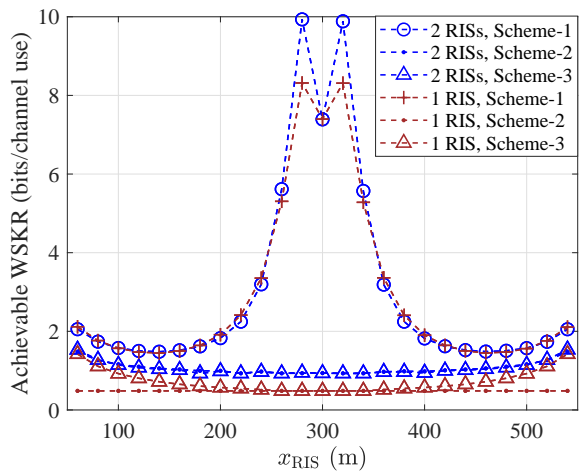}} 
	\caption{Achievable WSKR versus the location of the RIS $x_{\rm RIS}$ in a four-cell scenario for $M=M_e=4$, $P_{\rm A} = 30$ dBm.}\label{fig:2RISs}
\end{figure}

\subsection{Four-Cell Scenario}
In this section, we consider a four-cell scenario to study the optimal RIS deployment and the impact of having single RIS and multiple RISs on the PKG performance. The coordinates of the four BSs are $(0,0)$, $(600,0)$, $(600,600)$, and $(0,600)$, respectively. The four UTs are located as $(280,0)$, $(320,0)$, $(280,600)$, and $(320,600)$, respectively.

For the single RIS case, we consider 3 schemes: \textbf{Scheme-1}: The RIS moves from BS 1 to BS 2;
\textbf{Scheme-2}: The RIS moves from point D to point B;
\textbf{Scheme-3}: The RIS moves from BS 3 to BS 2. 
For the two RISs case, we consider 3 schemes: \textbf{Scheme-1}: RIS 1 moves from BS 1 to BS 2 and RIS 2 moves from BS 3 to BS 4;
\textbf{Scheme-2}: RIS 1 moves from BS 1 to BS 4, and RIS 2 moves from BS 3 to BS 2;
\textbf{Scheme-3}: RIS 1 moves from point D to point B, and RIS 2 moves from point A to point C.

Fig. \ref{fig:2RISs} shows the WSKR versus the coordinate of the RIS under different schemes. 
Firstly, for the single-RIS scenario, it is seen that \textbf{Scheme-1} achieves the maximum WSKR at $x_{\rm RIS} = 280$ m and $x_{\rm RIS} = 320$ m, which are the locations of UT 1 and UT 2, respectively. This observation is consistent with that of the two-cell system.  Additionally, the WSKR of \textbf{Scheme-1} is significantly higher than that of \textbf{Scheme-2} and \textbf{Scheme-3}. The reason behind this is the channel gain of the RIS-related channel is higher in \textbf{Scheme-1} when the RIS is close to the two UTs.
Secondly, for the two-RISs scenario, the WSKR curves are similar to that of single-RIS case and the WSKR of \textbf{Scheme-1} has the maximum value at $x_{\rm RIS} = 280$ m and $x_{\rm RIS} = 320$ m.
The reason is that the RISs are respectively closer to the two of the UTs at these two points.
By comparing the WSKR of single-RIS and two-RISs at $x_{\rm RIS} = 280$ and $x_{\rm RIS} = 320$, we can observe that two-RISs achieve higher performance gain than the single-RIS case. This is because when multiple
RISs are deployed in the network, the distance between each UT and its nearest RIS is reduced due to spatial diversity, which thus increases the channel gain. This observation confirms that the PKG performance of the UTs is usually dominated by the closest RIS.

\begin{table}[htbp]
	\centering  
	\caption{NIST random test result}  
	\label{table1}  
	\begin{tabular}{|c|c|c|c|c|c|c|c|c|c|}  
		\hline
		& Pass ratio & P-value\\
		\hline
		Approximate entropy & 0.9911 & 0.4851 \\ \hline
		Runs &  0.9985 & 0.5088 \\ \hline
		Ranking &  0.9941 & 0.4992 \\ \hline
		Longest runs of ones & 0.9899 & 0.6238 \\ \hline
		Frequency & 1 & 0.5201 \\ \hline
		FFT & 0.9853 & 0.4295 \\ \hline
		Block frequency & 1  & 0.4401 \\ \hline
		Cumulative sums & 0.9981 & 0.5057\\ \hline
		Serial & 0.9838 & 0.5013, 0.4895\\\hline
	\end{tabular}
\end{table} 
Finally, to verify the randomness of the obtained bit sequences for cryptographic applications, we conduct the National Institute of Standards and Technology (NIST) randomness test~\cite{rukhin2000} on the quantized bits. 
Note that NIST  is widely adopted to
evaluate the randomness of true-random and pseudo-random number generators. The output of the NIST is p-values, which is examined to ensure uniformity. The tested sequence is considered to  pass the test if the p-value is greater than $0.01$. In the simulation, we perform $9$ kinds of  NIST items for
$10,000$ trials, with the length of each sequence being $2048$ bits. The test results are shown in Table I, where the pass ratio represents the  number of passed trials over the $10,000$ trials. As can be observed, the p-values of all the test items are significantly greater than $0.01$, which means the sequence can be considered to be uniformly distributed. 
Also, the  pass ratios are all higher than 98\%. This confirms the excellent randomness of the  bits generated by the proposed  PKG scheme.

\section{Conclusion}
In this paper, we incorporated RISs in multi-cell PKG systems to alleviate the negative impact of multi-cell pilot contamination on PKG performance. Specifically, we studied the WSKR maximization problem by jointly optimizing the precoding matrices at the BSs and the phase shifts at the RISs.
To tackle this non-convex problem, 
we derived a tight upper bound of the objective function. To solve the upper bound maximization problem, we applied an AO-based algorithm that alternatively solves the sub-problem for precoding matrices and the sub-problem for phase-shifting vector. In particular, a Lagrangian dual algorithm and a PGA algorithm were employed to design the precoding matrices and the phase-shifting vector, respectively.
Simulation results verified the WSKR of the cell edge UTs under multi-cell pilot contamination can be unsatisfactory when the RISs are not deployed. By contrast, the proposed RIS-aided PKG scheme can achieve high WSKR even in harsh channel conditions. 
Moreover, a lower BDR can be observed with the increase of RIS elements number.

\appendix

\begin{appendices}

	\subsection{Proof of Theorem~1}\label{sec:proof-KGR-RIS}
			The KGR between BS $k$ and UT $k$ is expressed as \cite{jorswieck2013secret}, \cite{wong2009secret}
		\begin{align}
			R_{k} = \mathcal{I}(\y_{k};\z_{k})
			= \ln \frac{\operatorname{det}\left(\R_{\y_{k}} \R_{\z_{k}}\right)}{\operatorname{det}\left(\mathcal{R}_{\y_{k}\z_{k}} \right)} , \label{eq:MI0}
		\end{align}
		where the covariance matrices are denoted as $\R_{\y_{k}}=\mathbb{E}\left\{ \y_{k}\y_{k}^H \right\}$, $\R_{\z_{k}}=\mathbb{E}\left\{ \z_{k}\z_{k}^H \right\}$, and
		\begin{align}
			\mathcal{R}_{\y_{k}\z_{k}} =\left[\begin{array}{ll}
				\R_{\y_{k}} & \R_{\y_{k}\z_{k}}  \\
				\R_{\z_{k}\y_{k}} & \R_{\z_{k}}
			\end{array}\right].
		\end{align}
		Employing the determinant of a block matrix, we have 
		\begin{align}
			\det\left(\R_{\y_{k}\z_{k}} \right) = \det\left( \R_{\y_{k}} \right) \det\left( \R_{\z_{k}} - \R_{\z_{k}\y_{k}}\R_{\y_{k}}^{-1} \R_{\y_{k}\z_{k}}\right).
		\end{align}
		Thus, the KGR  is reformulated as
		\begin{align}
			R_{k} &= \ln \left(\frac{\det\left(\R_{\y_{k}}\right)\det\left(\R_{\z_{k}}\right)}{\det\left(\R_{\y_{k}}\right) \det\left(\R_{\z_{k}} - \R_{\z_{k}\y_{k}}\R_{\y_{k}}^{-1} \R_{\y_{k}\z_{k}}\right)}\right) \\
			& =- \ln \det\left(\I -  \R_{\z_{k}\y_{k}}\R_{\y_{k}}^{-1} \R_{\y_{k}\z_{k}} \R_{\z_{k}}^{-1} \right). \label{Rk00}
		\end{align}
		Given the channel estimations in (\ref{yk}) and (\ref{zk}),
the channel covariance matrices are calculated as 
		\begin{align}
			\R_{ {\z}_{k}}
			&=\P_{k}\left[\sum_{j=1}^{K}\R_{k,j}^{d} +
			\left(\bar{\v}^H \otimes \I_{M} \right)\sum_{j=1}^{K}
			\R_{k,j}^{r}\left(\bar{\v} \otimes \I_{M} \right)+\I_{M}\right]\P_{k}^H,\\
			\R_{\y_{k}} &= 
			\sum_{i=1}^{K}\P_{i}\left(\R_{i,k}^{d} + \left(\bar{\v}^H \otimes \I_{M}\right)
			\R_{i,k}^{r}\left(\bar{\v}^H \otimes \I_{M}\right)^H \right) \P_{i}^H +\I_{M_e},\\
			\R_{ \y_{k}{\z}_{k}}&=
			\P_{k}\left(\R_{k,k}^{d} + \left(\bar{\v}^H \otimes \I_{M}\right)
			\R_{k,k}^{r}\left(\bar{\v} \otimes \I_{M}\right)\right)\P_{k}^H = \R_{ {\z}_{k}\y_{k}} .
		\end{align}
		Substituting these channel covariance matrices into (\ref{Rk00}) and the result follows immediately.
	
	\subsection{Proof of Proposition~1} \label{KGR-analysis}
	First, the eigenvalue decomposition of $\R^d_{i,k}$ can be described as $\Q_{\rm D} \diag \left( \x \right) \Q_{\rm D}^H$, where $\x = [x_1,\cdots,x_M]$ is the eigenvalue vector and $\Q_{\rm D} = \left[\c_1, \cdots, \c_M\right]$ is the eigenmatrix and $\c_\ell, \ell \in \left\{1,\cdots,M\right\}$, is the  eigenvector. By denoting $ \sum_{j=1}^{K}\R^d_{k,j}+\I_{M} = \A$ and $ P_i\sum_{i=1}^{K}\R^d_{i,k}+\I_{M} = \B$,
	the derivative of $R_k$ w.r.t. $x_{\ell},\ell \in \left\{1,\cdots,M\right\}$, is
	\begin{align}
		\partial R_k/\partial x_{\ell} = - \tr \left(
		\B ^{-1}
		\R^d_{k,k}
		\left( \A - P_k\R^d_{k,k} \B^{-1}\R^d_{k,k}
		\right)^{-1} 
		P_{k} \R^d_{k,k} 
		\B^{-1}
		P_i\c_{\ell} \c_{\ell}^H\right).
	\end{align}
	Since $ \A - P_k\R^d_{k,k} \B^{-1}\R^d_{k,k} \succ \mat{0}$ and $\c_{\ell} \c_{\ell}^H \succeq \mat{0}$, we have $\partial R_k/\partial x_{\ell}< 0$.
	Thus, $R_k$ decreases monotonically	 with $x_{\ell}$. 
	When $\lambda_{\ell}\left( \R^d_{ i,k }\right) \rightarrow \infty$, we have $\lambda_{\ell}\left( \sum_{i=1}^{K}P_i\R^d_{ i,k } + \I_{M}\right) \rightarrow 0$, and thus
	$R_k \rightarrow 0$. The proof for $ \R^d_{k,j}$ and $P_i$ can be derived similarly.
	This completes the poof.

	\subsection{Proof of Theorem~2}\label{sec:proof-KGR-upperbound}
		According to the data-processing inequality, we have $\mathcal{I}(\y_{k};\z_{k}) \leq \mathcal{I}(\y_{k};\tilde{\z}_{k})$.
		Next, we calculate the mutual information $I(\y_{k};\tilde{\z}_{k})$.
		The channel covariance matrices are calculated as
		\begin{align}
			\R_{ \tilde{\z}_{k}}
			&=\sum_{j=1}^{K}\R_{k,j}^{d} +
			\left(\bar{\v}^H \otimes \I_{M} \right)\sum_{j=1}^{K}
			\R_{k,j}^{r}\left(\bar{\v} \otimes \I_{M} \right)+\I_{M},\\
			\R_{\y_{k}} &= 
			\sum_{i=1}^{K}\P_{i}\left(\R_{i,k}^{d} + \left(\bar{\v}^H \otimes \I_{M}\right)
			\R_{i,k}^{r}\left(\bar{\v}^H \otimes \I_{M}\right)^H \right) \P_{i}^H +\I_{M_e},\\
			\R_{ \y_{k}\tilde{\z}_{k}} &=
			\P_{k}\left(\R_{k,k}^{d} + \left(\bar{\v}^H \otimes \I_{M}\right)
			\R_{k,k}^{r}\left(\bar{\v} \otimes \I_{M}\right)\right) , \\
			\R_{ \tilde{\z}_{k}\y_{k}} &=
			\left(\R_{k,k}^{d} + \left(\bar{\v}^H \otimes \I_{M}\right)
			\R_{k,k}^{r}\left(\bar{\v} \otimes \I_{M}\right)\right)\P_{k}^H .
		\end{align}
		Substituting these channel covariance matrices into (\ref{Rk00}), the upper bound of the KGR is expressed as (\ref{KGR:ub}).
		This completes the proof.
	
	\subsection{Proof of Theorem~3}\label{sec:proof-KKT}
		We can calculate the conjugate gradient of $\mathcal{L}\left(\P,{\mat  \Lambda}\right)$ w.r.t. $\P$ as
		\begin{align}
			\frac{\partial \mathcal{L}\left(\P,{\mat  \Lambda}\right)}{\partial \P^{*}}  
			&=
				\sum_{k=1}^{K}w_{k} \left( \left(\P \M_{k} \P^H +\I_{M_e}\right)^{-1} \P \M_{k}  - 
				\left(\P \N_{k} \P^H +\I_{M_e}\right)^{-1} \P \N_{k} 
				\right)  -  \P {\mat  \Lambda} .
		\end{align}
		Set $\frac{\partial \mathcal{L}\left(\P,{\mat \Lambda}\right)}{\partial \P^{*}}  = \mat{0}$ and take the $\text{vec}(\cdot)$ operation on the both sides of the equation, we can obtain the KKT condition as shown in (\ref{KKT}). 	This completes the proof.

	\subsection{Proof of Theorem~4}\label{sec:proof-gradient-v}
		First, the upper bound is given by
		\begin{align}
			R_{k}^{\rm ub} 
			&=
			\ln \det \left(\R_{\y_{k}}\right) - 
			\ln \det\left(\R_{\y_{k}} -  \R_{\y_{k}\tilde{\z}_{k}}\R_{\tilde{\z}_{k}}^{-1} \R_{\tilde{\z}_{k}\y_{k}}  \right) 
		\end{align}
		and the differential of $R_{k}^{\rm ub} $ w.r.t. $\bar{\v}$ is written as
		\begin{align}
			d \left(R_{k}^{\rm ub} \right) & = \tr \left(
			\R_{\y_{k}}^{-1} d\left(\R_{\y_{k}}\right)
			\right)
			- \tr \left(
			\left(\R_{\y_{k}} -  \R_{\y_{k}\tilde{\z}_{k}}\R_{\tilde{\z}_{k}}^{-1} \R_{\tilde{\z}_{k}\y_{k}} \right)^{-1} d\left(\R_{\y_{k}} -  \R_{\y_{k}\tilde{\z}_{k}}\R_{\tilde{\z}_{k}}^{-1} \R_{\tilde{\z}_{k}\y_{k}} \right)
			\right). \label{differ}
		\end{align}
		Then, the first term of (\ref{differ}) is calculated as
		\begin{align}
			\tr \left(
			\R_{\y_{k}}^{-1} d\left(\R_{\y_{k}}\right)
			\right) 
			& = \tr \left({\mat{K}_{M_eNL}
				\sum_{i=1}^{K}
				\left(\I_{NL} \otimes \P_{i} \right)
				\R_{i,k}^{r}\left(\I_{NL} \otimes \P_{i} \right)^H
				\left(\bar{\v} \otimes \I_{M_e} \right)
				\R_{\y_{k}}^{-1} 
				\mat{K}_{M_e}}
			{\left(\I_{M_e} \otimes d\left(\bar{\v}^H\right) \right)} 
			\right) . 
		\end{align}

		Similarly,  we can calculate the second term of (\ref{differ})  as
		\begin{align}
			&\quad \tr \left(
			\left(\R_{\y_{k}} -  \R_{\y_{k}\tilde{\z}_{k}}\R_{\tilde{\z}_{k}}^{-1} \R_{\tilde{\z}_{k}\y_{k}} \right)^{-1} d\left(\R_{\y_{k}} -  \R_{\y_{k}\tilde{\z}_{k}}\R_{\tilde{\z}_{k}}^{-1} \R_{\tilde{\z}_{k}\y_{k}} \right)
			\right) \\
			& = 
			\tr \left({\mat{K}_{M_eNL}
				\sum_{i=1}^{K}
				\left(\I_{NL} \otimes \P_{i} \right)
				\R_{i,k}^{r}\left(\I_{NL} \otimes \P_{i} \right)^H
				\left(\bar{\v} \otimes \I_{M_e} \right)
				\left(\R_{\y_{k}} -  \R_{\y_{k}\tilde{\z}_{k}}\R_{\tilde{\z}_{k}}^{-1} \R_{\tilde{\z}_{k}\y_{k}} \right)^{-1} \mat{K}_{M_e}}{\left( \I_{M_e} \otimes d\left(\bar{\v}^H\right) \right)} 
			\right)  \notag \\
			& \quad -\tr \left(\R_{\tilde{\z}_{k}}^{-1} \R_{\tilde{\z}_{k}\y_{k}} 
			\left(\R_{\y_{k}} -  \R_{\y_{k}\tilde{\z}_{k}}\R_{\tilde{\z}_{k}}^{-1} \R_{\tilde{\z}_{k}\y_{k}} \right)^{-1} 
			{
				d\left(\R_{\y_{k}\tilde{\z}_{k}}\right)
			}
			\right)  \notag \\
			& \quad -\tr \left(\R_{\tilde{\z}_{k}\y_{k}}  
			\left(\R_{\y_{k}} -  \R_{\y_{k}\tilde{\z}_{k}}\R_{\tilde{\z}_{k}}^{-1} \R_{\tilde{\z}_{k}\y_{k}} \right)^{-1} 
			\R_{\y_{k}\tilde{\z}_{k}}
			{d\left(\R_{\tilde{\z}_{k}}^{-1}\right) 
			}
			\right)\notag \\
			& \quad -\tr \left(
			\left(\R_{\y_{k}} -  \R_{\y_{k}\tilde{\z}_{k}}\R_{\tilde{\z}_{k}}^{-1} \R_{\tilde{\z}_{k}\y_{k}} \right)^{-1} 
			\R_{\y_{k}\tilde{\z}_{k}}\R_{\tilde{\z}_{k}}^{-1} {d\left(\R_{\tilde{\z}_{k}\y_{k}}  \right)
			}
			\right) \label{diff}
			 .
		\end{align}
		By calculating each differential term in (\ref{diff}), we can obtain
		\begin{align}
			d\left(R_{k}^{\rm ub}\right)=  \tr 
			\left( \G_{k} d\left(\I_M \otimes d\left(\bar{\v}^H\right)  \right)
			\right) + \tr 
			\left( \Q_{k} d\left(\I_{M_e} \otimes d\left(\bar{\v}^H\right)  \right)
			\right).
		\end{align}
Therefore, the conjugate gradient of $R_{k}^{\rm ub}$ w.r.t. $\bar{\v}$ is calculated as
		\begin{align}
			\frac{\partial R_{k}^{\rm ub}}{\partial \bar{\v}^{*}}  = \sum_{m=1}^{M_e}\bar{\q}_{k,m} + \sum_{m=1}^{M}\bar{\g}_{k,m}.
		\end{align}
		This completes the proof.

\end{appendices}

	\bibliographystyle{IEEEtran}
	\bibliography{IEEEabrv,Ref1}

\end{document}